\patchcmd\Gread@eps{\@inputcheck#1 }{\@inputcheck"#1"\relax}{}{} 
\newtheorem{theorem}{Theorem}[section]
\newtheorem{proposition}[theorem]{Proposition}
\theoremstyle{definition}
\newtheorem{example}[theorem]{Example}
\theoremstyle{remark}
\newtheorem{remark}[theorem]{Remark}
\numberwithin{equation}{section}
\journal{Journal Name}
\begin{document}

\begin{frontmatter}


\title{ Marshall-Olkin exponential shock model covering all range of dependence}



\author{H. A. Mohtashami-Borzadaran$^1$, M. Amini$^1$, H. Jabbari$^1$, A. Dolati$^2$}

\address{$^1$Department of Statistics, Ferdowsi University of Mashhad, Mashhad, Iran\\ 
$^2$Department of Statistics, Faculty of Mathematics, Yazd University, Yazd, Iran
}

\begin{abstract}
In this paper, we present a new Marshall-Olkin exponential shock model. The new construction method gives the proposed model further ability to allocate the common joint shock on each of the components, making it suitable for application in fields like reliability and credit risk. The given model has a singular part and supports both positive and negative dependence structure. Main dependence properties of the model is given and an analysis of stress-strength is presented. After a performance analysis on the estimator of parameters, a real data is studied. Finally, we give the multivariate version of the proposed model and its main properties.  
\end{abstract}

\begin{keyword}
Shock model \sep Marshall-Olkin model \sep Negative and positive dependence\\
\MSC[2010] 60E05, 60E15, 62N05.


\end{keyword}

\end{frontmatter}


\section{Introduction}
The univariate exponential distribution is known for its applications in different fields such as reliability, telecommunication,
hydrology, medical sciences and environmental science (see \citet{balakrishnan2018exponential}). One of the main multivariate extensions of exponential distribution was given by \citet{marshall1967multivariate}. They constructed their model based on shock models. Let $T_1 \sim E(\theta_1)$, $T_2 \sim E(\theta_2)$, $T_{12} \sim E(\theta_{12})$, then the Marshall-Olkin (MO) shock model is achieved by
\begin{equation}\label{MO.construction}
(X,Y)=\big( \min \{T_1,T_{12} \}, \min \{T_2,T_{12} \} \big).
\end{equation}
They showed that their extension verifies the lack of memory property in the multivariate case. The MO exponential distribution has both singular and continuous parts in its density and covers positive dependence structure. The MO model has a broad application in reliability (see \citet{cherubini2015marshall}), finance and actuarial science (see \citet[Chapter 7]{elouerkhaoui2017credit} and \citet{lindskog2003common}). For instance, \citet{lindskog2003common} applied the MO model to credit risk. They stated the MO model allows the applicability of dependence among  the shock arrival times while it preserves exponentially distributed observed lifetimes. Also, \citet{cherubini2017gumbel} expressed that within the concept of credit risk and financial crisis, the MO model gives an important and flexible tool to study and represent systemic crises.  First, they asserted that, the MO model uses the unobserved shocks which are subject to each individual or common among individuals. Second, the common shocks are used for simultaneous default of the elements in the cluster stated as individual subsets disclosed to the same common factor. Third, the MO model preserves the marginal exponential distribution used for observed default times.   
\par Many bivariate and multivariate extensions of exponential distribution have been introduced and applied in reliability. \citet{esary1974multivariate} characterized a multivariate exponential distribution and derived a positive dependence condition for multivariate distributions with exponential minimum. \citet{raftery1984continuous} proposed a continuous multivariate exponential distribution which can model a full range of correlation and attain the Fr{\'e}chet bounds in the bivariate case. \citet{tawn1990modelling} introduced a multivariate exponential distribution that arises from limiting joint distribution of normalized component-wise maxima or minima. \citet{lin1993multivariant} used a shared-load model of the multivariate exponential distribution to describe the characteristics of dependent redundancies. The multivariate exponential distributions with constant failure rates has been given by \citet{basu1997multivariate}. The multivariate power exponential distribution was given by \citet{gomez1998multivariate}. 
\citet{cui2007analytical} propose an analytical method for reliability of coherent systems with dependent components based on MO model.
\citet{fan2009multivariate} proposed a multivariate exponential survival tree procedure that splits data using the score test statistic derived from a parametric exponential frailty model, which allows for fast evaluation of splits. \citet{kundu2009bivariate} estimated the parameter of a new bivariate exponential distribution using EM-algorithm. 
\citet{li2011generalized} gave a generalization of bivariate MO distributions, which the well-known MO model includes as special case. \citet{kundu2013bayes} did a Bayesian estimation for the MO bivariate Weibull distribution. \citet{bayramoglu2014reliability} used a MO model system by taking into account the system structure. \citet{kundu2014multivariate}  introduced a multivariate proportional reversed hazard model obtained from the MO copula. \citet{cha2017multivariate} derived a multivariate exponential distribution model based on dependent dynamic shock models. \citet{al2018weighted} generated a multivariate weighted exponential distribution to analyze failure time data. 
Recently, \citet{mohtashami-borzadaran_jabbari_amini_2020} extended the MO shock model with a distortion function that made the previous MO model more applicable.
\par Consider the construction in (\ref{MO.construction}). From a shock model point of view the MO model has limitation in terms of shock equality in the common shock $T_{12}$, that is the shock $T_{12}$ is likely to be equal on the two components $X_1$ and $X_2$. Our new model solves this issue by giving the new model concrete ability to set the random percentage amount of common shock on each of components $X_1$ and $X_2$. From a distribution theory point of view, most of the bivariate and multivariate extensions of exponential distribution have positive dependence structure and rarely have negative dependence structure. Recently,  \citet{mohsin2014new} proposed a new bivariate exponential distribution for modeling moderately negative dependent data. Here, we propose a new multivariate exponential shock model that can have negative dependence structure too.  
In Section 2, we give the new shock model and explain it's flexibility comparing to MO model. In Section 3, main properties of the proposed model such as dependence structure, association measures, tail dependence measures and stress-strength index are given. Section 4 focuses on the estimation of parameters for the new model which is challenging since it has a singular part. After that a performance analysis of the estimators are analyzed. Section 5 focuses on the application of real data given in \citet{mohsin2014new} and we show that the new model is much better model. Finally, we present the new multivariate MO shock model and obtain its main properties in Section 6. 
\section{The proposed model}
Given three independent exponential random variables $T_1$, $T_2$ and $T_{12}$ and an arbitrary standard uniform random variable $U$ that is independent of $T_1$, $T_2$ and $T_{12}$. Suppose $T_1 \sim E(\theta_1)$, $T_2 \sim E(\theta_2)$, $T_{12} \sim E(\theta_{12})$. Let $\alpha_{12}$ (taking values $\pm 1$) indicate the dependence structure of the model where  $\alpha_{12}=+1$  concludes positive and  $\alpha_{12}=-1$ gives negative dependence structure. When $\alpha_{12}=+1$, set $T_{12}^* (\alpha_{12})=T_{21}^* (\alpha_{12})=F_{T_{12}}(U)$ or $T_{12}^* (\alpha_{12})=T_{21}^* (\alpha_{12})=F_{T_{12}}(1-U)$. If $\alpha_{12}=-1$, put  $T_{12}^* (\alpha_{12})= F_{T_{12}}(U),T_{21}^* (\alpha_{12})=F_{T_{12}}(1-U)$ or $T_{12}^* (\alpha_{12})= F_{T_{12}}(1-U),T_{21}^* (\alpha_{12})=F_{T_{12}}(U)$ where $F_{T_{12}}$ is the corresponding distribution function of $T_{12}$. Then, the bivariate MO random vector $( R,S)$ covering all degree of dependence is
\begin{equation}\label{eeer}
(R,S)=\Big( \min \{T_1,T_{12}^* (\alpha_{12})\} , \min \{T_2,T_{21}^* (\alpha_{12}) \} \Big). 
\end{equation}
Clearly, when $\alpha_{12}=+1$, the vector $(R,S)$ reduces to 
$$(R,S)=\Big( \min \{T_1,T_{12} \},\min \{T_2,T_{12} \} \Big), $$
which is the well-known MO model given in \citet{marshall1967multivariate} that has positive dependence structure. When $\alpha_{12}=-1$,  the random vector $(R,S)$ gives a new MO model with negative dependence structure (see Proposition \ref{prop11}) which is called the bivariate negative MO model, denoted by $BNMO(\theta_1,\theta_2,\theta_{12})$. This model is obtained by
\begin{equation}\label{vec.model}
(R,S)=(\min \{T_1,F_{T_{12}}^{-1} (U)\},\min \{T_2,F_{T_{12}}^{-1} (1-U) \}),
\end{equation}
or
\begin{equation}\label{vec.model1}
(R,S)=(\min \{T_1,F_{T_{12}}^{-1} (1-U)\},\min \{T_2,F_{T_{12}}^{-1} (U) \}).
\end{equation}
Throughout this paper, we focus on the $(R,S)$ given in (\ref{vec.model}). 
\begin{figure}[h!]
\centering
\includegraphics[width=0.66\textwidth , height=0.28\textheight]{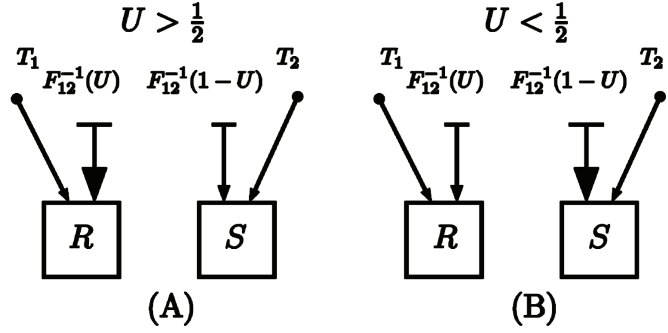}
\caption{Shock models based on the construction in (\ref{vec.model}).  {\bf (A)} When $U>0.5$, R is more likely to receive shock than S and {\bf (B)} when $U<0.5$, S is more likely to receive shock than R.}
\label{shockmodel.fig}
\end{figure}
A similar construction for a bivariate Poisson model has been given by \cite{genest2018new}. 
\par The interpretation for this construction is different to the well-known MO model. Consider Figure \ref{shockmodel.fig} given based on the relation (\ref{vec.model}). If $U>\frac{1}{2}$ then the dependent shock is more likely to be powerful on the first component $R$ and If $U<\frac{1}{2}$ the dependent shock is more likely to be powerful on the second component ($S$). \\
The survival function of both vectors (\ref{vec.model}) and (\ref{vec.model1}) for $\theta_1,\theta_2,\theta_{12}>0$ is 
\begin{eqnarray}
\bar{F}_{R,S}(r,s)&=&P( T_1>r,T_2>s,F_{T_{12}}(r)<U<1-F_{T_{12}}(s)), \nonumber \\
 &=& e^{-\theta_1 r} e^{-\theta_2 s} \big(  e^{-\theta_{12} r}+e^{-\theta_{12} s}-1 \big), \label{surv.model}
\end{eqnarray}
where $e^{-\theta_{12} r}+e^{-\theta_{12} s} \geq 1$.
This model has a singular part at $e^{-\theta_{12} r}+e^{-\theta_{12} s}=1$. The probability density function of (\ref{vec.model}) when $e^{-\theta_{12} r}+e^{-\theta_{12} s}>1$ is
\begin{equation}\label{dens.model}
f_{R,S}(r,s)= e^{-\theta_1 r -\theta_2 s} \Big( \theta_2 (\theta_1+\theta_{12}) e^{-\theta_{12}r}+\theta_1(\theta_2+\theta_{12})e^{-\theta_{12}s}-\theta_1 \theta_2 \Big),
\end{equation}
and based on \citet[ Theorem 1.1, pp. 15]{joebook1997}, for $e^{-\theta_{12} r}+e^{-\theta_{12} s}=1$ we have 
\begin{eqnarray}
h(r)&=& \lim_{s \rightarrow g^+ (r)} P(S \leq s  | R=r) -\lim_{s \rightarrow g^- (r)} P(S \leq s  | R=r) \nonumber \\
&=&\lim_{s \rightarrow g^- (r)} \frac{ \frac{\partial		}{\partial r} \bar{F}_{R,S}(r,s) }{ \frac{\partial		}{\partial r} \bar{F}_{R}(r) }  -\lim_{s \rightarrow g^+ (r)} \frac{ \frac{\partial		}{\partial r} \bar{F}_{R,S}(r,s) }{ \frac{\partial		}{\partial r} \bar{F}_{R}(r) }  \nonumber \\
&=& \frac{\theta_{12}	}{\theta_{1} + \theta_{12}} \big( 1- \exp \{ - \theta_{12} r \} \big)^{\theta_{2}/\theta_{2}},\nonumber 
\end{eqnarray}
where $g(r)= \frac{-1}{\theta_{12}} \ln (1- \exp \{ - \theta_{12} r \})$.\\
 The following statement gives the probability of the singular part.
\begin{proposition}
Set $\alpha:=\frac{\theta_{12}}{\theta_1 +\theta_{12}}$, $\beta:=\frac{\theta_{12}}{\theta_2 +\theta_{12}}$ and let $(R,S) \sim BNMO(\theta_1,\theta_2,\theta_{12})$. Then 
\[P(e^{-\theta_{12} R}+e^{-\theta_{12} S}=1)=Beta(\frac{1}{\alpha},\frac{1}{\beta}),\]
where $Beta(a,b)=\int_0^1 x^{a-1} (1-x)^{b-1} dx$.
\end{proposition}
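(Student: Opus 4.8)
The plan is to compute the probability of the singular event directly from the stochastic construction in (\ref{vec.model}) by conditioning on the uniform variable $U$. Write $A := F_{T_{12}}^{-1}(U)$ and $B := F_{T_{12}}^{-1}(1-U)$, so that $R = \min\{T_1, A\}$ and $S = \min\{T_2, B\}$. Since $F_{T_{12}}(t) = 1 - e^{-\theta_{12} t}$, one has $e^{-\theta_{12} A} = 1 - U$ and $e^{-\theta_{12} B} = U$, whence $e^{-\theta_{12}A} + e^{-\theta_{12}B} = 1$ identically. Thus the two shock thresholds $A$ and $B$ always lie exactly on the singular curve, and the task reduces to deciding when the realized pair $(R,S)$ inherits this property.

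The key step is to identify which sample paths land on the singular set. If either minimum is strictly attained by the idiosyncratic shock, say $R = T_1 < A$, then $e^{-\theta_{12}R} > e^{-\theta_{12}A}$ and the pair falls strictly inside the absolutely continuous region $e^{-\theta_{12}r} + e^{-\theta_{12}s} > 1$; the same conclusion holds if $S = T_2 < B$. Hence, up to a null set, the event $\{e^{-\theta_{12}R} + e^{-\theta_{12}S} = 1\}$ coincides with the event that \emph{both} minima are realized by the common-shock components, namely $\{T_1 \geq A\} \cap \{T_2 \geq B\}$.

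It then remains to evaluate $P(T_1 \geq A,\, T_2 \geq B)$. Conditioning on $U = u$ and using the independence of $T_1, T_2$ together with the exponential tails gives $P(T_1 \geq A \mid U = u) = (1-u)^{\theta_1/\theta_{12}}$ and $P(T_2 \geq B \mid U = u) = u^{\theta_2/\theta_{12}}$. Integrating the product over $u \in (0,1)$ yields
\[
P\big(e^{-\theta_{12}R} + e^{-\theta_{12}S} = 1\big) = \int_0^1 u^{\theta_2/\theta_{12}} (1-u)^{\theta_1/\theta_{12}}\, du,
\]
which is exactly $Beta\!\left(\frac{\theta_2+\theta_{12}}{\theta_{12}},\, \frac{\theta_1+\theta_{12}}{\theta_{12}}\right) = Beta\!\left(\frac{1}{\beta}, \frac{1}{\alpha}\right)$, and by symmetry of the Beta function this equals $Beta\!\left(\frac{1}{\alpha}, \frac{1}{\beta}\right)$, using $\alpha = \theta_{12}/(\theta_1+\theta_{12})$ and $\beta = \theta_{12}/(\theta_2+\theta_{12})$.

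The only genuinely delicate point is the geometric identification in the second paragraph, namely convincing oneself that the singular curve is reached precisely when both component lifetimes are governed by the common shock, and that the mixed cases contribute nothing to this probability. Once that equivalence is secured, the remainder is a routine conditioning argument terminating in a standard Beta integral.
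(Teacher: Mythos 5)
Your proof is correct and follows essentially the same route as the paper: both identify the singular event with the event that both minima are attained by the common-shock thresholds, condition on $U=u$, use independence of $T_1$ and $T_2$ to get the factors $(1-u)^{\theta_1/\theta_{12}}$ and $u^{\theta_2/\theta_{12}}$, and evaluate the resulting Beta integral (the paper writes the integrand with the roles of $u$ and $1-u$ swapped, which is equivalent under $u \mapsto 1-u$). If anything, you are more careful than the paper at the one delicate step, since you explicitly verify that whenever an idiosyncratic shock attains the minimum the pair lands strictly inside the region $e^{-\theta_{12}r}+e^{-\theta_{12}s}>1$, an equivalence the paper simply asserts.
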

\begin{proof}
Based on the construction in (\ref{vec.model}), we get 
\[ [e^{-\theta_{12} R}+e^{-\theta_{12} S}=1] \equiv [F_{T_{12}}(R)>U,F_{T_{12}}(S)>1-U]. \]
So, by conditioning w.r.t. $U=u$, we have
\begin{eqnarray}
P(e^{-\theta_{12} R}+e^{-\theta_{12} S}=1) &=& \int_0^1 P(F_{T_{12}}(R)>u) P(F_{T_{12}}(S)>1-u) du, \nonumber \\
&=& \int_0^1 u^{\frac{\theta_1}{\theta_{12}}} (1-u)^{\frac{\theta_2}{\theta_{12}}} du, \nonumber \\
&=& Beta(\frac{1}{\alpha},\frac{1}{\beta}). \nonumber
\end{eqnarray}
\end{proof}
\begin{remark}
If $\theta_1 = \theta_2 = \theta_{12}$ or equivalently $\alpha=\beta=\frac{1}{2}$,  then 
\[P(e^{-\theta_{12} R}+e^{-\theta_{12} S}=1)=\frac{1}{6}.\]
\end{remark}
\par The random vectors $(X_1,Y_1)$ and $(X_2,Y_2)$ can be compared in terms of their dependence structure via the upper orthant (UO) order. For any two vectors such as $(X_1,Y_1),(X_2,Y_2)$, we say $(X_1,Y_1)$ is less than $(X_2,Y_2)$ in UO order and write $(X_1,Y_1) \prec_{UO} (X_2,Y_2) $ whenever $\bar{F}_{X_1,Y_1} (x,y) \leq \bar{F}_{X_2,Y_2} (x,y)$ for all $x,y$. 
\begin{proposition}
Let $(R,S) \sim BNMO(\theta_1,\theta_2,\theta_{12})$ and $(R^{'},S^{'}) \sim BNMO(\theta_1,\theta_2,\theta_{12}^{'})$. If $\theta_{12} \leq \theta_{12}^{'}$ then $(R^{'},S^{'}) \prec_{UO} (R,S)$.
\end{proposition}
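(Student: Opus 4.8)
The plan is to verify the definition of the upper orthant order directly, that is, to show $\bar{F}_{R',S'}(r,s) \le \bar{F}_{R,S}(r,s)$ for every $(r,s)$ with $r,s \ge 0$. Since both vectors share the same marginal parameters $\theta_1,\theta_2$, the key observation is that raising the common-shock parameter from $\theta_{12}$ to $\theta_{12}'$ affects only the bracketed factor of the survival function (\ref{surv.model}), and that factor is monotone in $\theta_{12}$.

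First I would record the survival function explicitly, including the region where it vanishes. From the construction (\ref{vec.model}), the joint survival function equals $e^{-\theta_1 r-\theta_2 s}\bigl(e^{-\theta_{12}r}+e^{-\theta_{12}s}-1\bigr)$ on the set where $e^{-\theta_{12}r}+e^{-\theta_{12}s}\ge 1$, and equals $0$ otherwise, since the interval for $U$ appearing in (\ref{surv.model}) is then empty. I would abbreviate $A(\theta)=e^{-\theta r}+e^{-\theta s}$, so that $\bar{F}_{R,S}(r,s)=e^{-\theta_1 r-\theta_2 s}\,\max\{A(\theta_{12})-1,\,0\}$ uniformly over all $(r,s)$.

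The central step is an elementary monotonicity fact: for fixed $r,s\ge 0$, the map $\theta\mapsto A(\theta)=e^{-\theta r}+e^{-\theta s}$ is nonincreasing, because each summand is nonincreasing in $\theta$. Hence $\theta_{12}\le\theta_{12}'$ forces $A(\theta_{12}')\le A(\theta_{12})$, and therefore $\max\{A(\theta_{12}')-1,0\}\le \max\{A(\theta_{12})-1,0\}$. Multiplying through by the common positive factor $e^{-\theta_1 r-\theta_2 s}$ gives $\bar{F}_{R',S'}(r,s)\le \bar{F}_{R,S}(r,s)$ for all $(r,s)$, which is exactly $(R',S')\prec_{UO}(R,S)$.

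The only point requiring a little care, and the closest thing to an obstacle, is the behaviour on the singular curve $A(\theta_{12})=1$ and in the region $A(\theta_{12})<1$: there one must not substitute the closed-form expression, as it would be negative, but instead use that the survival function is truncated at $0$. Writing the survival function with the $\max\{\cdot,0\}$ built in, as above, handles all three configurations, namely $A(\theta_{12}')\ge 1$, $A(\theta_{12}')<1\le A(\theta_{12})$, and $A(\theta_{12})<1$, at once, so that no separate case analysis is genuinely needed once the truncation is incorporated.
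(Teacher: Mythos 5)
Your proof is correct and follows essentially the same route as the paper's: both reduce the claim to the elementary fact that $\theta \mapsto e^{-\theta r}+e^{-\theta s}$ is nonincreasing and then multiply by the common factor $e^{-\theta_1 r-\theta_2 s}$. If anything, your version is slightly more careful than the paper's, which writes the chain of inequalities using the closed-form expression everywhere and thus glosses over the region $e^{-\theta_{12}' r}+e^{-\theta_{12}' s}<1$ where that expression is negative while the true survival function is zero; your $\max\{\cdot,0\}$ truncation handles this cleanly.
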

\begin{proof}
For any $r,s,\theta_1,\theta_2>0$ and $\theta_{12} \leq \theta_{12}^{'}$, we have 
\begin{eqnarray}
\bar{F}_{R,S} (r,s)&=&  e^{-\theta_1 r} e^{-\theta_2 s} (e^{-\theta_{12} r}+e^{-\theta_{12} s}-1) \nonumber \\
&\geq &  e^{-\theta_1 r} e^{-\theta_2 s} (e^{-\theta_{12}^{'} r}+e^{-\theta_{12}^{'} s}-1) \nonumber \\
&\geq & \bar{F}_{R^{'},S^{'}} (r,s). \nonumber
\end{eqnarray}
Hence, $(R^{'},S^{'}) \prec_{UO} (R,S)$ and this completes the proof.
\end{proof}
\section{Some properties}
In this section, we present some properties of BNMO model such as dependence structure, association measures, tail dependence measures and stress-strength index.
\subsection{Dependence structure}
Let $(X,Y)$ be a random vector with survival function $\bar{F}$. The pair $(X,Y)$ is said to be right corner set decreasing, denoted by $RCSD(X,Y)$, whenever for any $x_1<x_2$ and $y_1<y_2$ we have 
\[\bar{F}(x_1,y_1)\bar{F}(x_2,y_2)-\bar{F}(x_1,y_2)\bar{F}(x_2,y_1) \leq 0,\]
that is equivalent to 
\[\frac{\partial^2}{\partial r \partial s} \ln(\bar{F} (r,s)) \leq 0.\] 
$RCSD(X,Y)$ implies negative dependence structures like $RTD(X|Y)$, $RTD(Y|X)$ and $NQD(X,Y)$ (for more information see \citet{nelsen2007introduction}). The following statement specifies the dependence structure of the proposed model.
\begin{proposition}\label{prop11}
If $(R,S) \sim BNMO(\theta_1,\theta_2,\theta_{12})$, then we have $RCSD(R,S)$.\\
\end{proposition}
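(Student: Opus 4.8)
The plan is to verify the equivalent analytic condition for $RCSD$ recorded just above the statement, namely $\frac{\partial^2}{\partial r\partial s}\ln\bar F_{R,S}(r,s)\le 0$, on the region $e^{-\theta_{12}r}+e^{-\theta_{12}s}>1$ where the survival function is smooth and strictly positive. First I would take logarithms in \eqref{surv.model} to obtain
\[
\ln\bar F_{R,S}(r,s)=-\theta_1 r-\theta_2 s+\ln g(r,s),\qquad g(r,s):=e^{-\theta_{12}r}+e^{-\theta_{12}s}-1 .
\]
The two linear terms are annihilated by the mixed second derivative, so the whole problem reduces to controlling the sign of $\frac{\partial^2}{\partial r\partial s}\ln g$.

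The key observation is that $g$ is additively separable: $g_r=-\theta_{12}e^{-\theta_{12}r}$ depends only on $r$ and $g_s=-\theta_{12}e^{-\theta_{12}s}$ depends only on $s$, so the cross term $g_{rs}$ vanishes identically. Hence
\[
\frac{\partial^2}{\partial r\partial s}\ln\bar F_{R,S}(r,s)=\frac{g_{rs}\,g-g_r g_s}{g^2}=-\frac{g_r g_s}{g^2}.
\]
Because $g_r<0$ and $g_s<0$ throughout the support, the product $g_r g_s$ is strictly positive while $g^2>0$; therefore the mixed partial is $\le 0$, which is precisely the $RCSD$ condition.

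The only delicate point is that this derivative argument is valid on the absolutely continuous part $e^{-\theta_{12}r}+e^{-\theta_{12}s}>1$, whereas the model carries a singular component on the curve $e^{-\theta_{12}r}+e^{-\theta_{12}s}=1$ (the set whose probability was computed earlier). To be fully rigorous I would either return to the definitional inequality $\bar F(x_1,y_1)\bar F(x_2,y_2)\le\bar F(x_1,y_2)\bar F(x_2,y_1)$ for $x_1<x_2,\ y_1<y_2$ and check it directly when the corners of the rectangle straddle the singular curve, or argue by continuity through an approximation from the interior. I expect this boundary bookkeeping to be the main obstacle, since the interior computation is remarkably short: the additive separability of $g$ forces $g_{rs}=0$ and hands the sign of the mixed partial to us for free.
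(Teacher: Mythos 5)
Your proposal is correct and is essentially the paper's own proof: the paper likewise differentiates $\ln\bar F_{R,S}$ and exhibits $\frac{\partial^2}{\partial r\partial s}\ln\bar F_{R,S}(r,s)=\frac{-\theta_{12}^2 e^{-\theta_{12}r-\theta_{12}s}}{(1-e^{-\theta_{12}r}-e^{-\theta_{12}s})^2}\leq 0$, which is exactly your $-g_rg_s/g^2$ coming from the additive separability of $g$. Your closing caveat about the singular curve is actually more careful than the paper (which applies the derivative criterion without comment), and it resolves at once: since $\bar F$ is decreasing in each argument and vanishes precisely on the upper set $e^{-\theta_{12}r}+e^{-\theta_{12}s}\leq 1$, for any rectangle with $x_1<x_2$, $y_1<y_2$ either $\bar F(x_2,y_2)=0$, making the definitional inequality $\bar F(x_1,y_1)\bar F(x_2,y_2)\leq \bar F(x_1,y_2)\bar F(x_2,y_1)$ trivial, or the entire rectangle lies in the open region where your interior computation is valid.
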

\begin{proof}
For all $\theta_1,\theta_2,\theta_{12} \in R$, we obtain
\begin{equation*}\label{dep.struc1}
\frac{\partial^2}{\partial r \partial s} \ln(\bar{F}_{R,S} (r,s)) =
\frac{-\theta_{12}^2e^{-\theta_{12}r-\theta_{12}s}}{(1-e^{-\theta_{12}r}-e^{-\theta_{12}s})^2} \leq 0,
\end{equation*}
that implies $RCSD(R,S)$ and the proof is complete.
\end{proof}

\subsection{Association measures and tail dependence}
For every pair $(R,S)$ with survival function $\bar{F}$, some famous measures of association are Kendall's tau $\tau=4E\big(\bar{F}(X,Y)\big)-1$ and Spearman's rho $\rho_s=12 \int_{(0,\infty)^2} (\bar{F} (x,y)-\bar{F}_1 (x)\bar{F}_2 (y))f_1 (x) f_2(y)dxdy$. Also, the lower and upper tail dependence coefficient $\lambda_L$  and $\lambda_U$ are defined by $\lambda_L= \lim_{t \rightarrow 0^+} P[X \leq F_1^{-1}(t)|Y \leq F_2^{-1}(t)]$ and $\lambda_U= \lim_{t \rightarrow 1^-} P[X>F_1^{-1}(t)|Y>F_2^{-1}(t)]$, respectively (see \citet{nelsen2007introduction}). Association measures $\tau$ and $\rho_s$ did not have closed form, so we plotted their variation for different values of  $\alpha=\frac{\theta_{12}}{\theta_1 + \theta_{12}}$ and $\beta=\frac{\theta_{12}}{\theta_2 + \theta_{12}}$. Based on Figure \ref{tau.fig}, as the value of $\alpha,\beta \rightarrow 1$ the value of dependence measure $\tau$ decreases to -1 and the dependency becomes stronger. Also, Figure \ref{spear.fig} illustrates that strength of dependence increases to $\rho_s=-1$ as $\alpha,\beta \rightarrow 1$. Figure \ref{rho to tau} shows that the value of $\rho_s/\tau \rightarrow 1.5$ as $\alpha,\beta \rightarrow 0$. This shows that as the dependency decreases to independence the value of $\rho_s/\tau \rightarrow 1.5$ and becomes lower if the dependency increases despite the fact we can't prove this theoretically. For the tail dependence, we prove the following statement.
\begin{proposition}
If $(R,S) \sim BNMO(\theta_1,\theta_2,\theta_{12})$, then $\lambda_L=\lambda_U=0$.
\end{proposition}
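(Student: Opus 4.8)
The plan is to turn each coefficient into a one-parameter limit along the diagonal of the marginal quantiles, and then exploit two different features of the model: the singular support constraint for the upper coefficient, and a second-order cancellation near the origin for the lower one.

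First I would record the marginals. Putting $s=0$ in \eqref{surv.model} gives $\bar{F}_R(r)=\bar{F}_{R,S}(r,0)=e^{-(\theta_1+\theta_{12})r}$, and symmetrically $\bar{F}_S(s)=e^{-(\theta_2+\theta_{12})s}$; thus $R\sim E(\theta_1+\theta_{12})$ and $S\sim E(\theta_2+\theta_{12})$, so that $F_R^{-1}(t)=-\ln(1-t)/(\theta_1+\theta_{12})$ and $F_S^{-1}(t)=-\ln(1-t)/(\theta_2+\theta_{12})$. In particular $e^{-\theta_{12}F_R^{-1}(t)}=(1-t)^{\alpha}$ and $e^{-\theta_{12}F_S^{-1}(t)}=(1-t)^{\beta}$, with $\alpha,\beta$ as defined above.

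For the upper coefficient I would write $\lambda_U=\lim_{t\to1^-}\bar{F}_{R,S}\big(F_R^{-1}(t),F_S^{-1}(t)\big)/(1-t)$. The decisive point is the support restriction already visible in \eqref{surv.model}: the joint survival function is supported on $\{e^{-\theta_{12}r}+e^{-\theta_{12}s}\ge1\}$ and vanishes identically off it (equivalently, as in the argument computing the probability of the singular part, the event $\{R>r,\,S>s\}$ forces $F_{T_{12}}(r)<U<1-F_{T_{12}}(s)$, which is empty once $e^{-\theta_{12}r}+e^{-\theta_{12}s}<1$). As $t\to1^-$ both quantiles tend to infinity and $(1-t)^{\alpha}+(1-t)^{\beta}\to0<1$, so the numerator is exactly $0$ for all $t$ sufficiently close to $1$; hence $\lambda_U=0$ with no further computation.

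For the lower coefficient I would use $\lambda_L=\lim_{t\to0^+}F_{R,S}\big(F_R^{-1}(t),F_S^{-1}(t)\big)/t$ together with the identity $F_{R,S}(r,s)=1-\bar{F}_R(r)-\bar{F}_S(s)+\bar{F}_{R,S}(r,s)$, noting that $1-\bar{F}_R(r)-\bar{F}_S(s)=2t-1$ along the diagonal of quantiles. Now $t\to0$ drives both quantiles to $0$, where $e^{-\theta_{12}r}+e^{-\theta_{12}s}\to2>1$, so we stay safely inside the absolutely continuous region and may use the closed form \eqref{surv.model}. Writing $r=t/(\theta_1+\theta_{12})+O(t^2)$ and $s=t/(\theta_2+\theta_{12})+O(t^2)$ and expanding, the expected main obstacle is the first-order cancellation: a Taylor expansion should give $\bar{F}_{R,S}(r,s)=1-2t+O(t^2)$, whose linear term cancels exactly the $2t-1$ from the marginal terms. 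This is precisely the statement that the distribution function behaves like $f_{R,S}(0,0)\,rs$ near the origin, with $f_{R,S}(0,0)=\theta_1\theta_2+\theta_{12}(\theta_1+\theta_2)$ read off from \eqref{dens.model}. Once that cancellation is confirmed one obtains $F_{R,S}(r,s)=O(t^2)$, so the ratio is $O(t)\to0$ and $\lambda_L=0$, completing the proof.
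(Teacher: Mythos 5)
Your proposal is correct, and for $\lambda_U$ it takes a genuinely different --- and in fact more careful --- route than the paper. For $\lambda_L$ the two arguments coincide in substance: the paper evaluates $\lim_{t \to 0^+} t^{-1}\big(2t-1+\bar{F}_{R,S}(F_R^{-1}(t),F_S^{-1}(t))\big)$ by a direct limit computation, while you verify the same first-order cancellation by Taylor expansion, showing $\bar{F}_{R,S}$ along the quantile diagonal equals $1-2t+O(t^2)$; these are the same calculation in different clothing (and your appeal to $f_{R,S}(0,0)=\theta_1\theta_2+\theta_{12}(\theta_1+\theta_2)$, while correct, is not even needed once the $O(t^2)$ bound is in hand). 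For $\lambda_U$, however, the paper substitutes the closed form from (\ref{surv.model}) and asserts $\lim_{y\to 0^+} y^{-1}\,y^{2-\alpha-\beta}(y^{\alpha}+y^{\beta}-1)=0$; but this expression equals $y^{1-\beta}+y^{1-\alpha}-y^{1-\alpha-\beta}$, whose limit is $0$ only when $\alpha+\beta<1$ (equivalently $\theta_{12}^2<\theta_1\theta_2$) and is $-1$ or $-\infty$ otherwise --- the formula (\ref{surv.model}) is being applied outside its region of validity $e^{-\theta_{12}r}+e^{-\theta_{12}s}\ge 1$. Your support argument repairs exactly this: since $\{R>r,\,S>s\}$ forces $F_{T_{12}}(r)<U<1-F_{T_{12}}(s)$, the joint survival function vanishes identically once $(1-t)^{\alpha}+(1-t)^{\beta}<1$, which holds for all $t$ sufficiently close to $1$, so the numerator is exactly zero and $\lambda_U=0$ uniformly over the parameter space. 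What your route buys is a proof valid for all $(\theta_1,\theta_2,\theta_{12})$ together with a structural explanation (the quantile diagonal eventually exits the support, so upper-tail co-survival is impossible, not merely asymptotically negligible); what the paper's route buys, in the parameter range where it is legitimate, is a one-line limit.
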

\begin{proof}
Let $\alpha=\frac{\theta_{12}}{\theta_1 + \theta_{12}}$ and $\beta=\frac{\theta_{12}}{\theta_2 + \theta_{12}}$. For every $\alpha,\beta \in (0,1)$,  we have
\[ \bar{F}_{R,S}(F_R^{-1}(t),F_S^{-1}(t))=(1-t)^{2-\alpha-\beta} \Big( (1-t)^\alpha + (1-t)^\beta -1 \Big). \]
So, 
\begin{eqnarray}
\lambda_L &=&  \lim_{t \rightarrow 0^+} P[X \leq F_1^{-1}(t)|Y \leq F_2^{-1}(t)] \nonumber\\
                     &=&  \lim_{t \rightarrow 0^+} \frac{1}{t} \big( 2t-1 +  \bar{F}_{R,S}(F_R^{-1}(t),F_S^{-1}(t)) \big)  \nonumber\\
                     &=&  \lim_{y \rightarrow 1^-}  \frac{1}{1-y} \big( 1-2y + y^{2-\alpha-\beta}(y^\alpha + y^\beta -1) \big)=0. \nonumber
\end{eqnarray} 
Also,
\begin{eqnarray}
\lambda_U &=& \lim_{t \rightarrow 1^-} P[X>F_1^{-1}(t)|Y>F_2^{-1}(t)] \nonumber\\
&=&  \lim_{t \rightarrow 1^-} \frac{1}{1-t} \big(\bar{F}_{R,S}(F_R^{-1}(t),F_S^{-1}(t) \big)  \nonumber\\
&=&  \lim_{y \rightarrow 0^+}  \frac{1}{y} \big(  y^{2-\alpha-\beta}(y^\alpha + y^\beta -1) \big)=0. \nonumber
\end{eqnarray} 
So the proof is complete.
\end{proof}

\begin{figure}
\centering
\subfloat{{\includegraphics[width=0.5\textwidth]{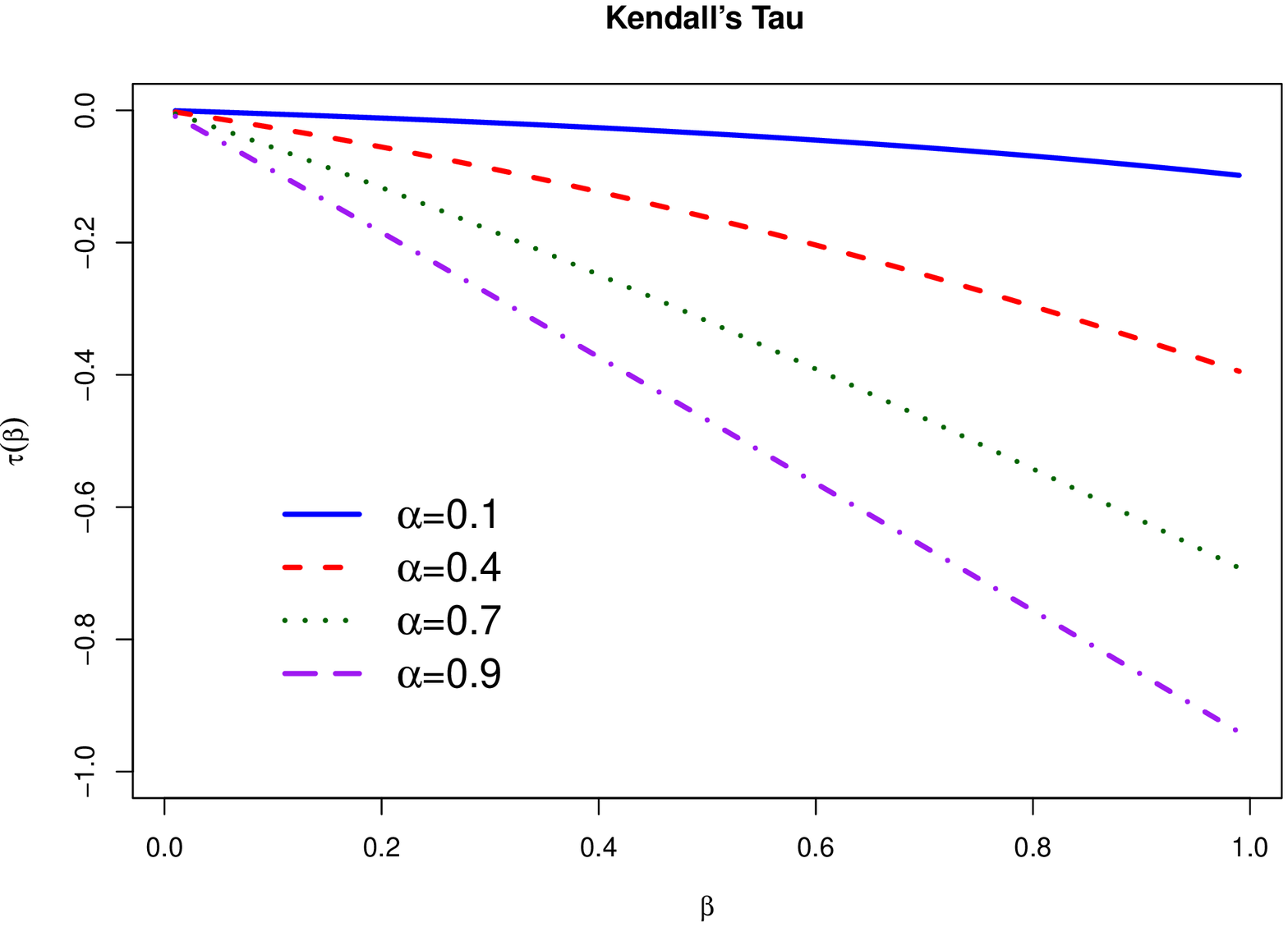} }}
\subfloat{{\includegraphics[width=0.5\textwidth]{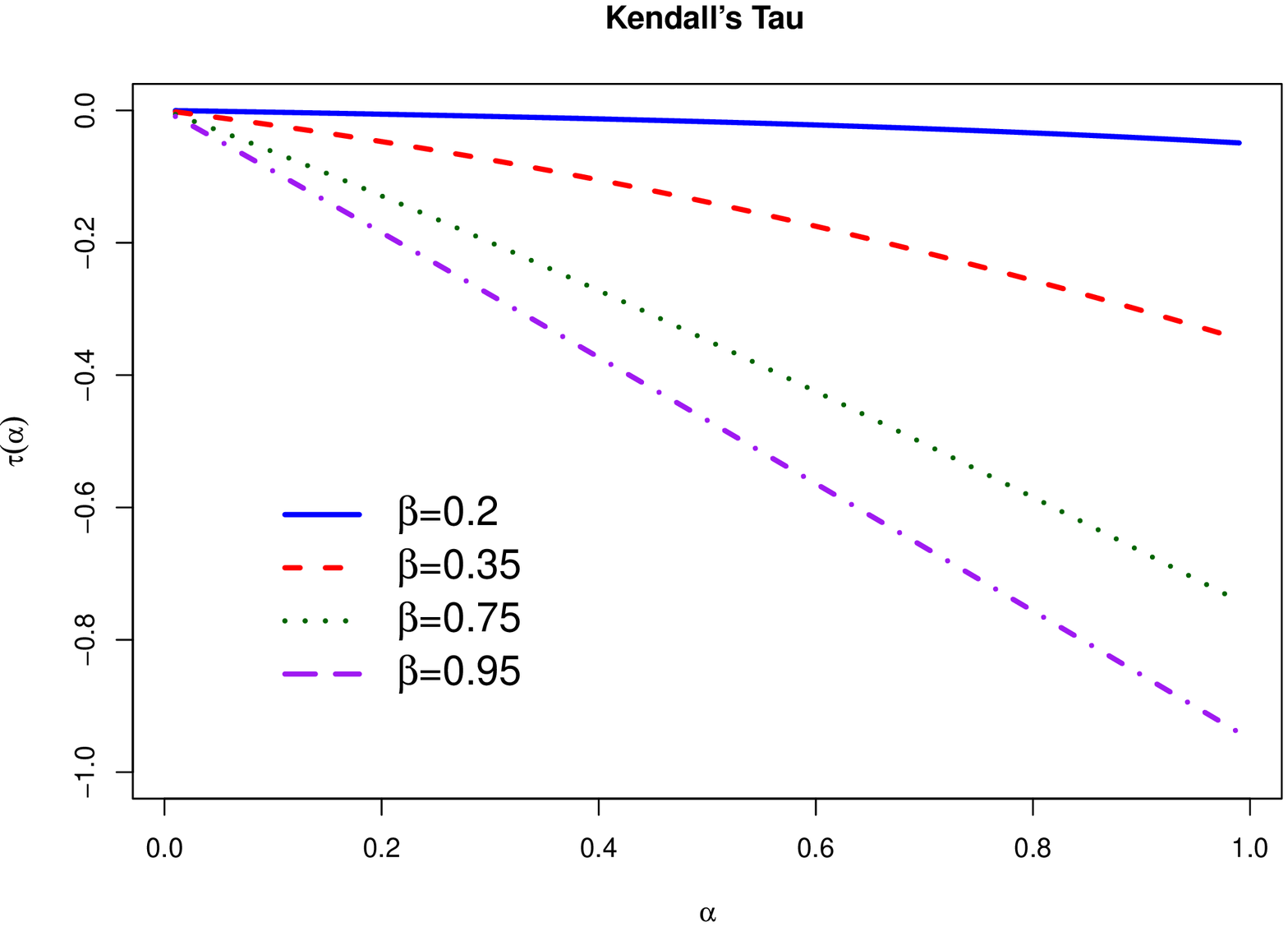} }}
\caption{Kendall's tau against different dependence parameters,	 $\alpha$ and $\beta$.}
\label{tau.fig}
\end{figure}

\begin{figure}
\centering
\subfloat{{\includegraphics[width=0.5\textwidth]{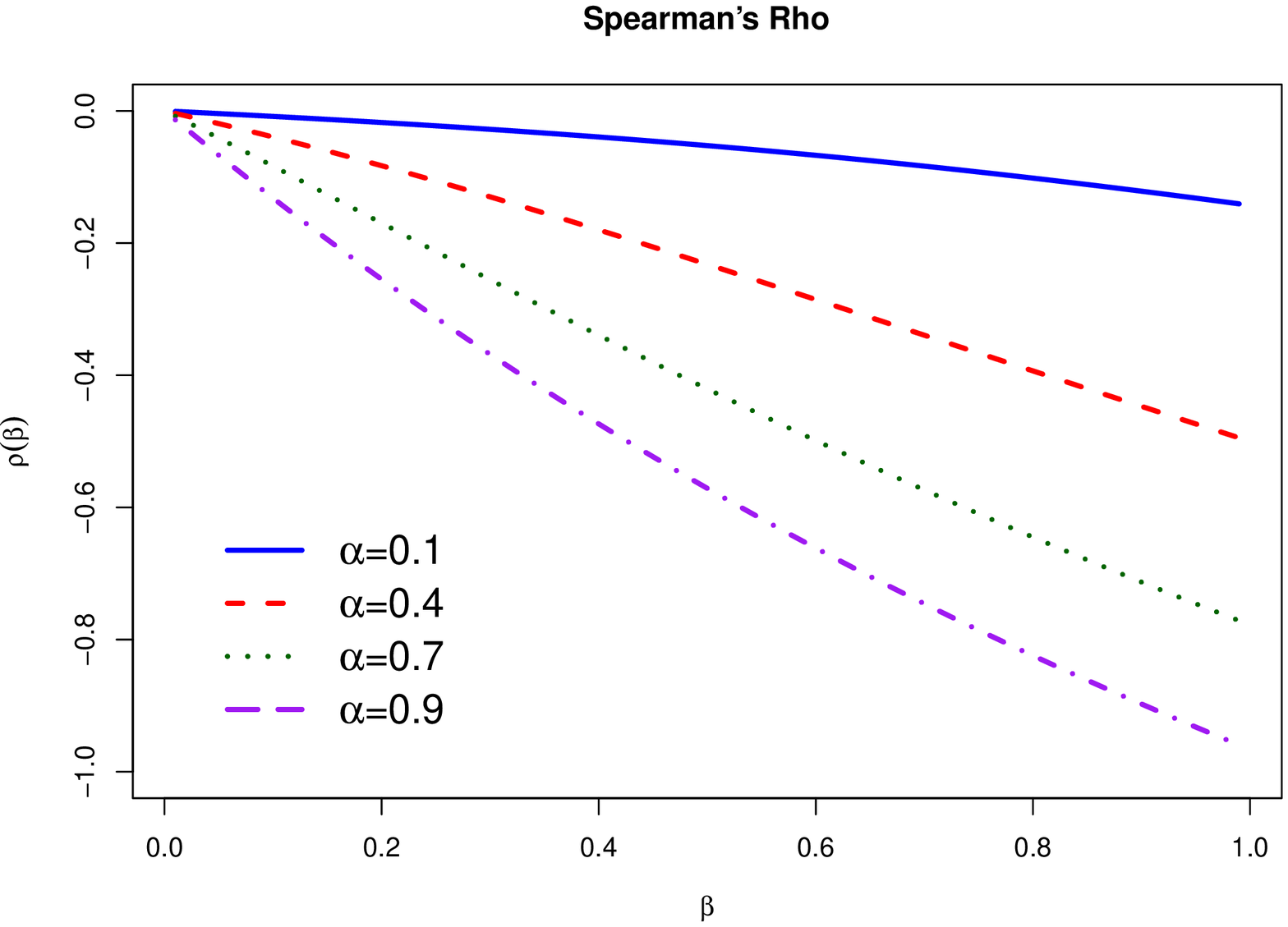} }}
\subfloat{{\includegraphics[width=0.5\textwidth]{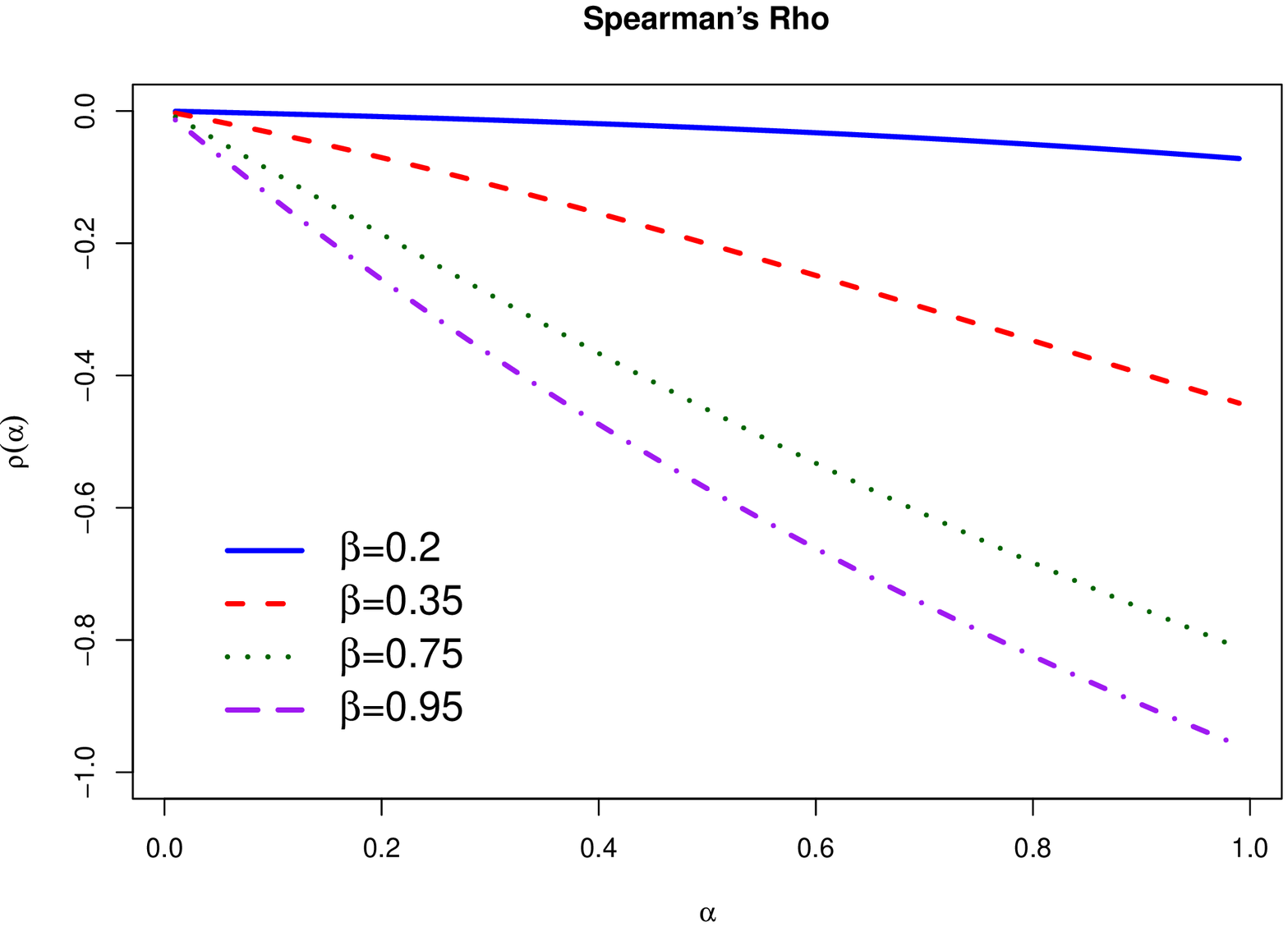} }}
\caption{Spearman's rho against different dependence parameters, $\alpha$ and $\beta$.}
\label{spear.fig}
\end{figure}

\begin{figure}[h!]
\centering
\includegraphics[width=0.5\textwidth]{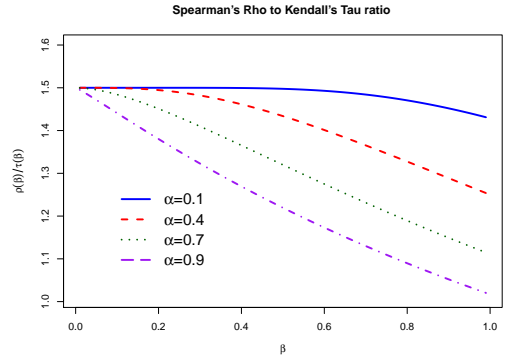}
\caption{Spearman's rho to Kendall's tau ratio for different values of $\alpha$ and $\beta$.}
\label{rho to tau}
\end{figure}

\subsection{Stress-strength index}
In the context of reliability stress-strength model can be described as an analysis of reliability for a system in terms of random variables $X$ representing stress (supply) experienced by the system and $Y$ representing the strength (demand) of the system available to tolerate the stress. The system fails when the stress exceeds the strength. So, $R=P(X<Y)$ is the reliability considering the failure mode described by the stress-strength relation. the stress-strength index can be computed in terms of competing risk given in \citet{shih2016bivariate}. Under the competeing risk models, failure times $X$ and $Y$ are called latent failure times. Based on the failure time $T=\min(X,Y)$ and failure cause $C=1$ if $X<Y$ or $C=2$ if $X>Y$, we define the sub distribution functions as 
\[F^*(1,t)=P(C=1,T \leq t)=\int_0^t f^* (1,z)dz,\]
and
\[F^*(2,t)=P(C=2,T \leq t)=\int_0^t f^* (2,z)dz,\]
where $f^* (1,t)=- \partial \bar{F}(x,y)/ \partial x |_{x=y=t}$ and $f^* (2,t)=- \partial \bar{F}(x,y)/ \partial y |_{x=y=t}$ that are called sub-density functions. Then, the stress-strength index is given by $P(X<Y)=F^* (1,\infty)$ and $P(X>Y)=F^* (2,\infty)$. According to the competing risk model, the stress-strength index for the proposed model is obtained in the following statement.
\begin{proposition}
Let $(R,S) \sim BNMO(\theta_1,\theta_2,\theta_{12})$, then 
\[P(R<S)=\frac{2\theta_1 + \theta_{12}}{\theta_1 + \theta_2 +\theta_{12}}-\frac{\theta_1}{\theta_1 + \theta_2},\] 
or equivalently
\[P(R<S)=\frac{2\beta -\alpha \beta}{\beta + \alpha -\alpha \beta} - \frac{\beta - \alpha \beta}{\beta + \alpha -2\alpha \beta}.\]
\end{proposition}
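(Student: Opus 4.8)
The plan is to use the competing-risk representation set up immediately above the statement. Since ties $R=S$ occur with probability zero, $P(R<S)=F^*(1,\infty)=\int_0^\infty f^*(1,t)\,dt$, where the sub-density is $f^*(1,t)=-\,\partial\bar{F}_{R,S}(r,s)/\partial r\big|_{r=s=t}$ and $\bar{F}_{R,S}$ is the survival function in (\ref{surv.model}). So the whole computation reduces to one differentiation, one restriction to the diagonal, and one elementary integral.

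First I would differentiate (\ref{surv.model}) in $r$. Writing $\bar{F}_{R,S}(r,s)=e^{-\theta_1 r-\theta_2 s}\big(e^{-\theta_{12}r}+e^{-\theta_{12}s}-1\big)$ and applying the product and chain rules gives $-\partial_r\bar{F}_{R,S}=e^{-\theta_1 r-\theta_2 s}\big[\theta_1(e^{-\theta_{12}r}+e^{-\theta_{12}s}-1)+\theta_{12}e^{-\theta_{12}r}\big]$. Setting $r=s=t$ collapses the bracket to $(2\theta_1+\theta_{12})e^{-\theta_{12}t}-\theta_1$, so that $f^*(1,t)=e^{-(\theta_1+\theta_2)t}\big((2\theta_1+\theta_{12})e^{-\theta_{12}t}-\theta_1\big)$. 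Integrating this as a difference of two exponential integrals, $\int_0^\infty e^{-(\theta_1+\theta_2+\theta_{12})t}\,dt=\tfrac{1}{\theta_1+\theta_2+\theta_{12}}$ and $\int_0^\infty e^{-(\theta_1+\theta_2)t}\,dt=\tfrac{1}{\theta_1+\theta_2}$, yields $\frac{2\theta_1+\theta_{12}}{\theta_1+\theta_2+\theta_{12}}-\frac{\theta_1}{\theta_1+\theta_2}$, the first displayed form. The equivalent expression in $\alpha,\beta$ then follows by the purely algebraic substitution $\theta_1=\theta_{12}(1-\alpha)/\alpha$, $\theta_2=\theta_{12}(1-\beta)/\beta$, after which $\theta_{12}$ cancels throughout.

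The step I expect to require the most care is the range of integration, precisely because of the singular part. The law of $(R,S)$ is confined to $\{e^{-\theta_{12}r}+e^{-\theta_{12}s}\ge 1\}$ and carries a singular component on the boundary curve $e^{-\theta_{12}r}+e^{-\theta_{12}s}=1$; consequently the diagonal $r=s=t$ already leaves the support at $t=\tfrac{\ln 2}{\theta_{12}}$, where $2e^{-\theta_{12}t}=1$. One must therefore justify both that $f^*(1,t)$ is integrated over the correct range and that the mass sitting on the singular boundary is correctly folded into $-\partial_r\bar{F}_{R,S}|_{r=s=t}$, since the value of the final constant is entirely controlled by this bookkeeping. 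To guard against an error here I would verify the result independently by conditioning on $U=u$: the thresholds $F_{T_{12}}^{-1}(u)$ and $F_{T_{12}}^{-1}(1-u)$ are then constants, $P\big(\min(T_1,F_{T_{12}}^{-1}(u))<\min(T_2,F_{T_{12}}^{-1}(1-u))\big)$ is a short closed-form calculation with two independent exponentials, and integrating it over $u\in(0,1)$ gives $P(R<S)$ directly. Reconciling this direct value with the competing-risk integral is, to my mind, the real content of the proof.
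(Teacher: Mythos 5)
Your main computation is exactly the paper's own proof: the paper computes the same sub-density $f^*(1,t)=\theta_1 e^{-(\theta_1+\theta_2)t}\bigl(2e^{-\theta_{12}t}-1\bigr)+\theta_{12}e^{-(\theta_1+\theta_2+\theta_{12})t}$ and integrates it over all of $(0,\infty)$ to obtain the displayed closed form. But the worry you flag at the end is not a loose end to be tidied up --- it is a genuine gap, and pursuing it shows that the proposition itself, and hence both your main derivation and the paper's proof, are wrong for $\theta_1\neq\theta_2$. The expression (\ref{surv.model}) for $\bar{F}_{R,S}$ is valid only on the support $\{e^{-\theta_{12}r}+e^{-\theta_{12}s}\ge 1\}$; once $t>t_0:=\ln 2/\theta_{12}$, the point $(t,t)$ has a whole neighborhood on which $\bar{F}_{R,S}\equiv 0$ (if $R>t$ then $e^{-\theta_{12}S}\ge 1-e^{-\theta_{12}t}>\tfrac12$, forcing $S<t$), so the true sub-density vanishes there while the analytic expression does not. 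The integral must therefore stop at $t_0$, and the tail the stated formula silently keeps equals
\begin{equation*}
\int_{t_0}^{\infty}\Bigl((2\theta_1+\theta_{12})e^{-(\theta_1+\theta_2+\theta_{12})t}-\theta_1 e^{-(\theta_1+\theta_2)t}\Bigr)dt
=\frac{\theta_{12}(\theta_2-\theta_1)}{(\theta_1+\theta_2)(\theta_1+\theta_2+\theta_{12})}\,2^{-(\theta_1+\theta_2+\theta_{12})/\theta_{12}},
\end{equation*}
which vanishes only when $\theta_1=\theta_2$.

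The cross-check you propose does not reconcile the two values; it refutes the proposition. Conditionally on $U=u$, with $a=F_{T_{12}}^{-1}(u)$, $b=F_{T_{12}}^{-1}(1-u)$ and $c=(\theta_1+\theta_2)/\theta_{12}$, an elementary computation with the two independent exponentials gives $P(R<S\mid U=u)=\frac{\theta_1}{\theta_1+\theta_2}+\frac{\theta_2}{\theta_1+\theta_2}(1-u)^{c}$ for $u<\tfrac12$ and $\frac{\theta_1}{\theta_1+\theta_2}\bigl(1-u^{c}\bigr)$ for $u>\tfrac12$; integrating over $u$ yields
\begin{equation*}
P(R<S)=\frac{\theta_1}{\theta_1+\theta_2}+\frac{(\theta_2-\theta_1)\,\theta_{12}}{(\theta_1+\theta_2)(\theta_1+\theta_2+\theta_{12})}\Bigl(1-2^{-(\theta_1+\theta_2+\theta_{12})/\theta_{12}}\Bigr),
\end{equation*}
which agrees with the truncated competing-risk integral $\int_0^{t_0}f^*(1,t)\,dt$ and with a direct integration of the density (\ref{dens.model}) over $\{r<s\}$ plus the singular mass on the arc where $r<s$, but not with the proposition. (Your claim that $P(R=S)=0$ is correct, since the diagonal meets the singular arc in a single point, so $F^*(1,\infty)$ is indeed the right target.) A concrete counterexample: for $(\theta_1,\theta_2,\theta_{12})=(1,2,1)$ the proposition gives $\tfrac34-\tfrac13=\tfrac{5}{12}=\tfrac{80}{192}$, whereas the true value is $\tfrac{79}{192}$ (singular part $\tfrac{5}{192}$ plus absolutely continuous part $\tfrac{74}{192}$); the discrepancy $\tfrac{1}{192}$ is exactly $\int_{\ln 2}^{\infty}(3e^{-4t}-e^{-3t})\,dt$. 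So your derivation, like the paper's, establishes the displayed formula only in the symmetric case $\theta_1=\theta_2$, where both sides equal $\tfrac12$; the ``bookkeeping'' you rightly identified as the real content of the proof is precisely where the published argument fails.
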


\begin{proof}
Based on (\ref{surv.model}), we have
\[f^* (1,t) = - \frac{\partial \bar{F}_{R,S} (r,s)}{\partial r}|_{r=s=t} =\theta_1 e^{-(\theta_1 + \theta_2)t}(2e^{-\theta_{12}t}-1)+
\theta_{12}e^{-(\theta_1 + \theta_2+\theta_{12})t}.\]
So,
\begin{eqnarray}
F^* (1,t)&=&\int_0^t f^* (1,u)du \nonumber \\
&=& \frac{2\theta_1 + \theta_{12}}{\theta_1 + \theta_2 +\theta_{12}} (1-e^{-(\theta_1 + \theta_2+\theta_{12})t}) - \frac{\theta_1}{\theta_1 + \theta_2} (1-e^{-(\theta_1 + \theta_2)t}). \nonumber
\end{eqnarray}
Thus,
\begin{eqnarray}
P(R<S)&=& \lim_{t \rightarrow +\infty} F^* (1,t) \nonumber \\
&=& \frac{2\theta_1 + \theta_{12}}{\theta_1 + \theta_2 +\theta_{12}}-\frac{\theta_1}{\theta_1 + \theta_2}. \nonumber 
\end{eqnarray}
Using $\alpha=\frac{\theta_{12}}{\theta_1 +\theta_{12}}$ and $\beta=\frac{\theta_{12}}{\theta_2 +\theta_{12}}$ we get the second statement.
\end{proof}

\begin{remark}
If $\theta_1=\theta_2$ or equivalently $\alpha=\beta$, then $P(R<S)=\frac{1}{2}$.
\end{remark}
\begin{figure}
\centering
\includegraphics[width=0.75\textwidth]{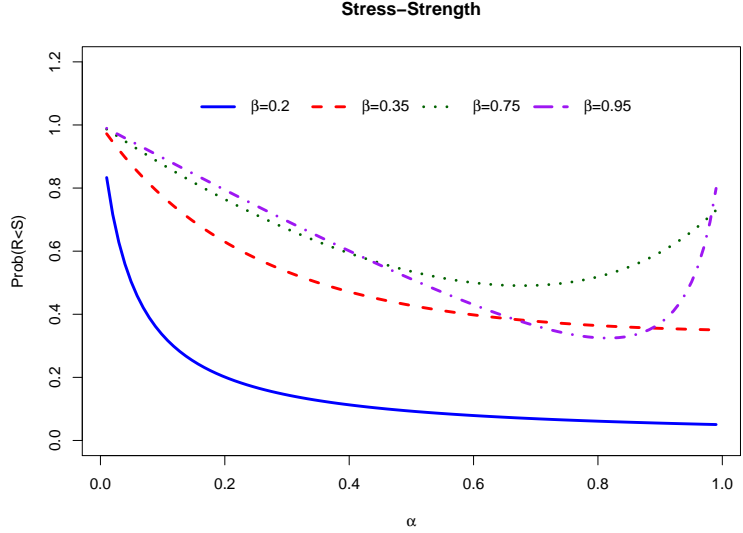}
\caption{Stress-strength index for different values of dependence parameters, $\alpha$ and $\beta$.}
\label{stress.fig}
\end{figure}
Figure \ref{stress.fig} illustrates the stress-strength index for different values of $\alpha$ and $\beta$.
\section{Estimation and simulation}

\subsection{Random number generation}
Simulating random numbers is essential in understanding the behaviour of a model. In order to generate random numbers from $BNMO(\theta_1 , \theta_2 , \theta_{12})$, the following algorithm is given.

\begin{algorithm}[H]
\label{alg1}
\caption{Random number generation from $BNMO(\theta_1 , \theta_2 , \theta_{12})$}
\begin{itemize}
\item[Step 1.] Generate three independent random variables $T_i \sim E(\theta_i)$ for $i=1,2$ and $U \sim U(0,1)$.
\item[Step 2.] Set $R=\min \{ T_1 , F_{T_{12}}^{-1} (U) \}$ and
$S=\min \{ T_2 , F_{T_{12}}^{-1} (1-U) \}$, where $F_{T_{12}}^{-1} (.)$ is the quantile function of  $T_{12} \sim E(\theta_{12})$.
\item[Step 3.] The desired pair is $(R,S)$.
\end{itemize} 
\end{algorithm} 
\begin{figure}[h!]
\includegraphics[width=\textwidth]{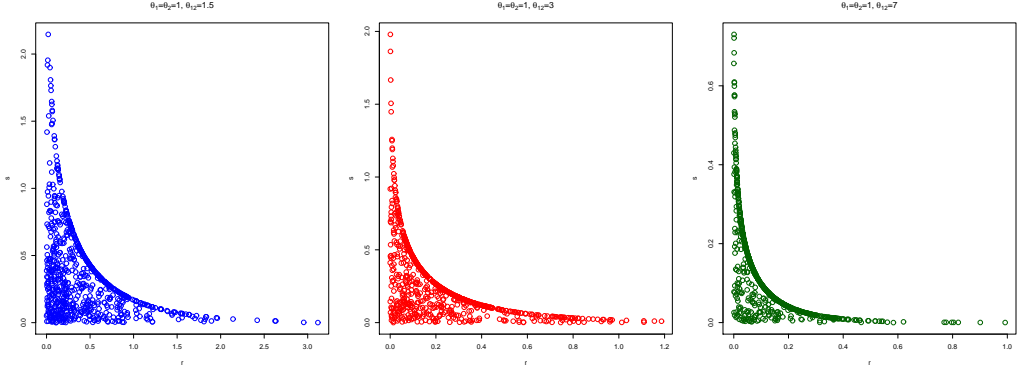}
\caption{Scatterplot of 750 generated data using Algorithm \ref{alg1} for different values of dependence parameter $\theta_{12}=1.5,3,7$ (from left to right) and fixed marginal parameters $\theta_1=\theta_2=1$.}
\label{scatter1}
\end{figure}
Figure \ref{scatter1} shows scatterplots of 750 generated data from Algorithm \ref{alg1}. As the dependence parameter $\theta_{12}$ increases, the dependence increases and so the data tend to assemble near (0,0).

%
%

\subsection{Estimation method}
Here, we will estimate the  parameters using maximum likelihood (ML) method.\\
Consider the random sample of size $m$, namely $\{(r_1,s_1), \ldots,(r_m,s_m)\}$ distributed from $BNMO(\theta_1,\theta_2,\theta_{12})$.
Let $m_1$ and $m_2$ denote the number of observations for which $e^{-\theta_{12} r}+e^{-\theta_{12} s}>1$ and $e^{-\theta_{12} r}+e^{-\theta_{12} s}=1$, respectively, such that $m_1 + m_2 = m$. The log-likelihood function for a given sample of observations is given as 
\begin{eqnarray}
l(\theta_1,\theta_2,\theta_{12}) &=&  - \theta_1 \sum_{j=1}^{m_1} r_j - \theta_2 \sum_{j=1}^{m_1} s_j  \nonumber \\
&&  +\sum_{j=1}^{m_1} \log \Big( \theta_2 (\theta_1 + \theta_{12})e^{- \theta_{12} r_j}  +\theta_1 (\theta_2 + \theta_{12})e^{- \theta_{12} s_j} - \theta_1 \theta_2 \Big) \nonumber \\
&&  + m_2 \log( \frac{\theta_{12}}{\theta_{1} + \theta_{12}} )   +\frac{\theta_2}{\theta_{12}} \sum_{j=m_1 + 1}^{m} \log (1-e^{-\theta_{12} r_j}),  \label{loglik} 
\end{eqnarray}
where the observations are classified such that $\{(r_1,s_1), \ldots,(r_{m_1},s_{m_1})\} \in A$ and $\{(r_{m_1 +1},s_{m_1 +1}), \ldots,(r_{m},s_{m})\} \in A^c$ and $A=\{ (r_i,s_i) | e^{-\theta_{12} r}+e^{-\theta_{12} s}>1 \}$.\\
Based on the normal equations (given in the Appendix), if either of $m_1$ or $m_2$ are zero, then the ML estimator may not be unique. However, this won't be an issue since
\begin{eqnarray}
P(m_1=0)&=&[P( e^{-\theta_{12} R}+e^{-\theta_{12} S}>1 )]^m \rightarrow 0 ~~ as~~m \rightarrow \infty, \nonumber 
\end{eqnarray}
and
\begin{eqnarray}
P(m_2=0)&=&[P( e^{-\theta_{12} R}+e^{-\theta_{12} S}=1 )]^m \rightarrow 0 ~~ as~~m \rightarrow \infty. \nonumber 
\end{eqnarray}
So, for moderate sample size m, the events  $[m_1=0]$ and $[m_2=0]$ are rare. For the case $m_1 , m_2 >0$, the resulting system of equations (normal equations given in the Appendix) cannot be solved in closed form expressions and so numerical methods are required. But, we found these methods to have less efficiency than the direct maximization of log-likelihood function in (\ref{loglik}). The maximization can be performed using \url{optim} function in the R software. Initial values for optimization are derived based on global non-linear optimization package "Rsolnp" in the R software version 3.6.1. The constraints $\theta_1,\theta_2,\theta_{12}>0$ was taken into account. We found the local maximums  after having different values of $\theta_1,\theta_2,\theta_{12}$. So, we select the global maximum based on the following relation
\begin{equation}
(\hat{\theta}_1,\hat{\theta}_2,\hat{\theta}_{12})={\arg\max}_{\theta_{1},\theta_{2},\theta_{12} \in \Theta} l(\theta_{1},\theta_{2},\theta_{12}).
\end{equation}
\subsection{Performance analysis}
Next, a finite sample performance of the estimators for marginal parameters $(\theta_{1},\theta_{2})$ and dependence parameter $\theta_{12}$ is given. The performance is evaluated according to bias and
mean square error (MSE) of the ML estimators introduced in the previous section. A specific sample size $m$ has been taken from $BNMO(1,3,0.8)$ and MSEs have been calculated based on 10000
iterations. The results are shown in Figure \ref{msebias}. Clearly, the ML estimator performs very well for small sample sizes.  Evidently, after some fluctuations, the values of bias becomes more stable around zero as the sample size increases. We must note that for the MLE, global maximum was unique all the time and did not correspond to the boundary of parameter space. The computational time required to identify the global
maximum after trying out all combinations of the initial values did not exceed 7 hours.
\begin{figure}[h!]
	\centering
	\includegraphics[width=\textwidth]{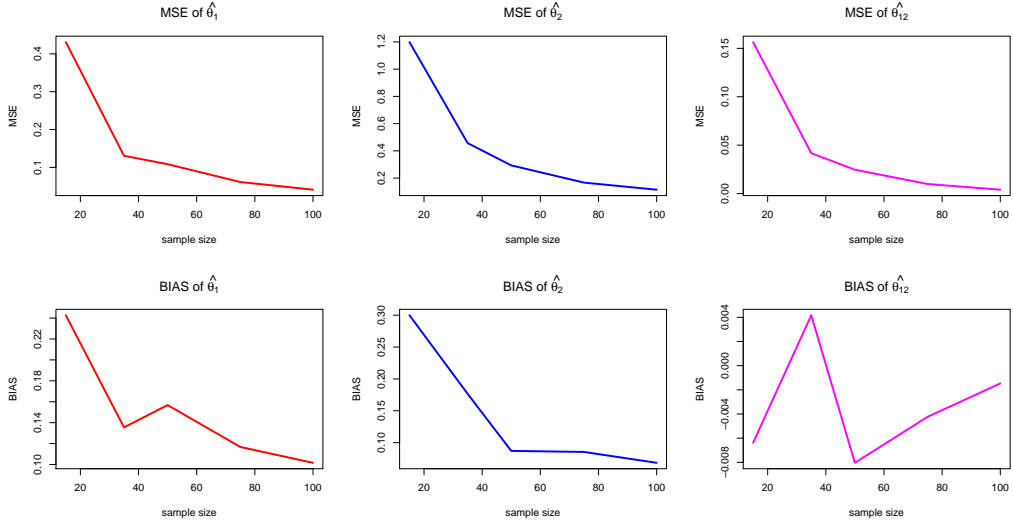}
	\caption{Performance analysis of ML estimators based on MSE and bias for $(\theta_1,\theta_2,\theta_{12}$)=(1,3,0.8) using 10000 independent replications.  }
	\label{msebias}
\end{figure}

\section{Application}
For illustrating results, an application of the BNGM distribution to a dataset is given in this section. \citet{mohsin2014new} explored the mercury (Hg) concentration in largemouth bass.  The data were collected from 53 different Florida lakes. They were used to examine the factors that influence the level of mercury concentration in bass. In specific dates, water samples were collected from the surface of the middle of each lake and the amount of alkalinity (mg/l), calcium (mg/l) and chlorophyll (mg/l) were measured in each sample. They used the average values of August and March. After that, a sample of fish was taken from each lake with sample sizes ranging from 4 to 44 fish and the minimum mercury concentration ($\mu$g/g) among the sampled fish were measured. \citet{lange1993influence} observed that the bio-accumulation of mercury in the largemouth bass was strongly influenced by the chemical characteristics of the lakes. Therefore, chemical substance like calcium along with minimum mercury concentration in the sampled fish is of interest. We use the proposed distribution to model these data. As \citet{mohsin2014new} stated, we have omitted the $40^{th}$ row of the data (considered as an outlier). A data summary is given in Table \ref{descriptive}. Based on the values of $\rho_s$ and $\tau$, both variables have moderate amount of dependency.  

%
\begin{table}
\centering
\caption{Descriptive statistics of data vectors Mercury and Calcium.}
\begin{tabular}{ccc}
\hline \hline
Statistics & Mercury & Calcium \\ 
\hline \hline
Minimum & 0.04 & 1.1 \\ 
$1^{st}$-Quantile & 0.09 & 3.3 \\ 
Median & 0.25 & 12.6 \\ 
Mean & 0.27 & 22.2 \\ 
$3^{rd}$-Quantile & 0.33 & 35.6 \\ 
Maximum & 0.92 & 90.7 \\ 
SD & 0.22 & 24.93 \\ 
Spearman's rho & \multicolumn{2}{c}{-0.536} \\ 
Kendall's tau & \multicolumn{2}{c}{-0.392} \\ 
$\frac{Rho}{Tau}$ & \multicolumn{2}{c}{1.36} \\ 
\hline 
\end{tabular} 
\label{descriptive}
\end{table}
\begin{figure}
	\centering
	\subfloat{{\includegraphics[width=0.5\textwidth]{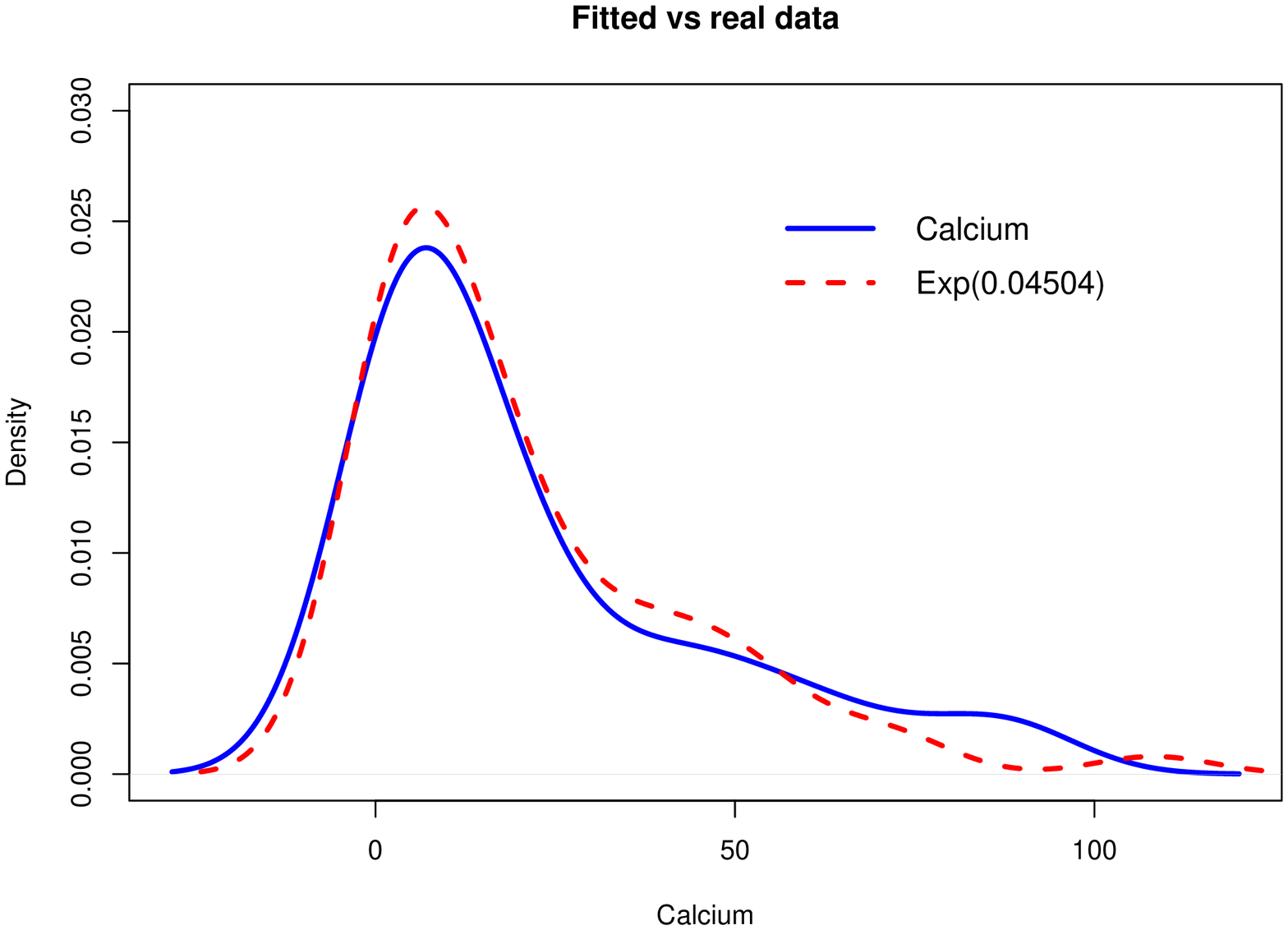} }}
	\subfloat{{\includegraphics[width=0.5\textwidth]{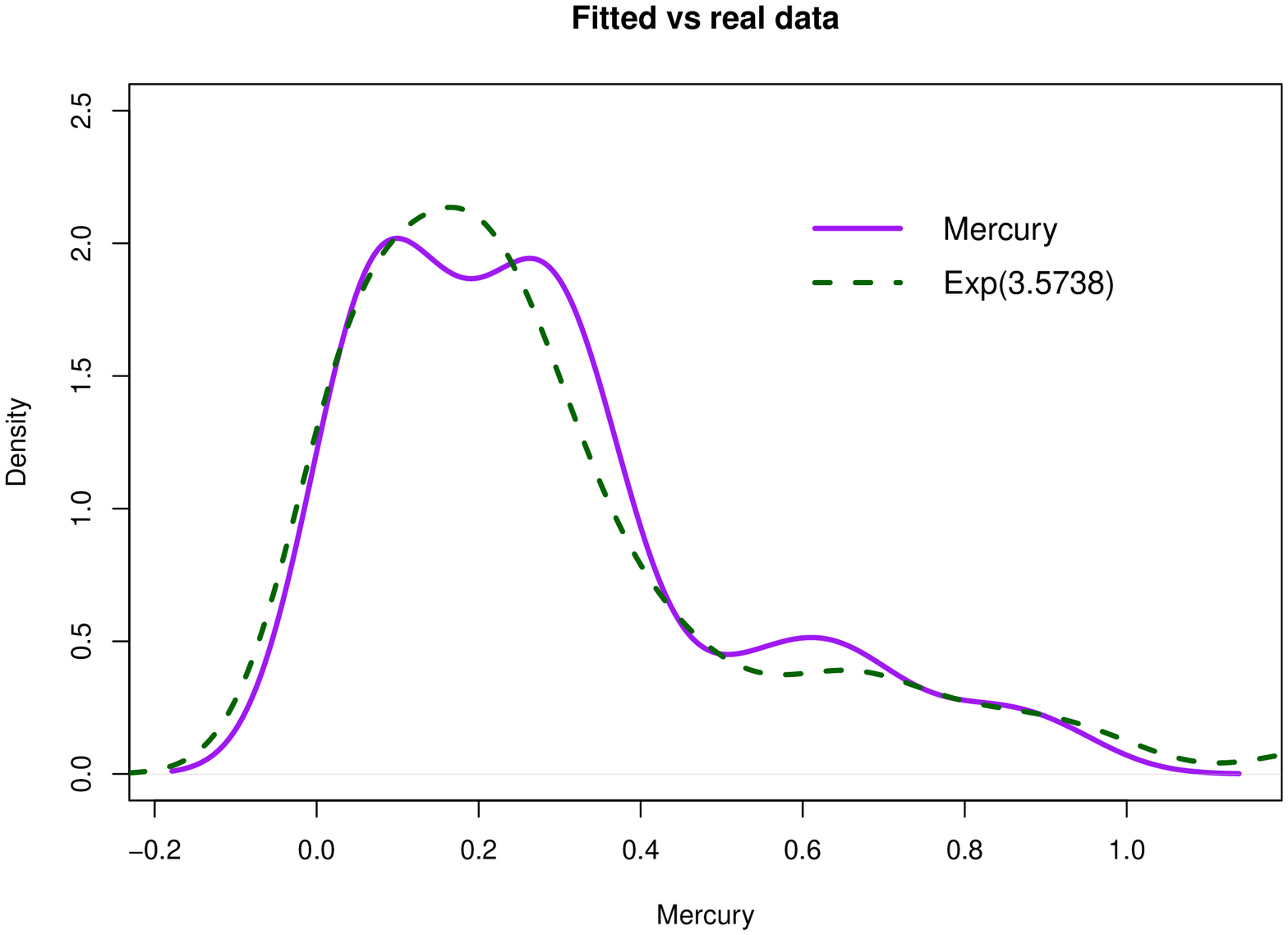} }}
	\caption{Density plot of Calcium (left) and Mercury (right) and their fitted distributions.}
	\label{density.fig}
\end{figure}
We have fitted an exponential distribution to the marginal data which are summarized in Table \ref{marginal.inference} and illustrated in Figure \ref{density.fig}. Clearly the marginal distributions are well fitted to the data.
\begin{table}
\centering
\caption{Marginal goodness-of-fit for Mercury and Calcium}
\begin{tabular}{ccccc}
\hline \hline
Variables & Distribution & MLE & Log-likelihood & K-S P-value  \\ 
\hline \hline 
Mercury & Exponential & 3.573 & 14.502 & 0.195 \\ 
Calcium & Exponential & 0.045 & -217.309 & 0.232 \\ 
\hline 
\end{tabular} 
\label{marginal.inference}
\end{table}
Now, that we are sure the marginal data are exponentially distributed, we are going to fit the joint model to the data (Mercury , Calcium) and compare it with the results given in \citet{mohsin2014new}. The results are given in Table \ref{bivariate.inference}. It is clear that the BNMO model is a better model than the BALE model given by \citet{mohsin2014new}. Both models are well fitted to the data based on the Kolmogrov-Smirnov goodness-of-fit criteria. Figure \ref{biv.scatter} shows the scatter plot of Mercury versus Calcium for real data and simulated data which are generated from the fitted BNMO model. 

\begin{table}
\centering
\caption{Goodness-of-fit for the joint vector (Mercury,Calcium).}
\begin{small}
\begin{tabular}{cccccc}
\hline \hline
Model & MLE & Log-Likelihood & K-S P-value. \\ 
\hline \hline
BNMO &  $\hat{\theta}_{1}=0.01, \hat{\theta}_2=3.67, \hat{\theta}_{12}=0.038$ & -194.0028 & 0.28 \\ 
BALE (\citet{mohsin2014new}) &  $\hat{\alpha}=3.63, \hat{\beta}=0.01, \hat{\gamma}=0.25$ & -3887.665 & 0.16 \\ 
\hline 
\end{tabular} 
\label{bivariate.inference}
\end{small}
\end{table}

\begin{figure}[h!]
	\centering
	\includegraphics[width=0.6\textwidth]{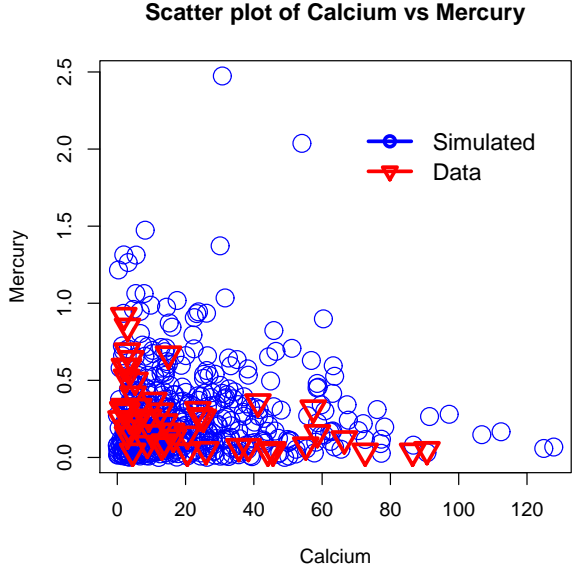}
	\caption{Scatterplot of Mercury versus Calcium for generated data which are from the best BNMO model. }
	\label{biv.scatter}
\end{figure}
\section{The multivariate case}
In this section we present the multivariate MO model covering all degrees of dependence. Consider the independent random variables $U \sim U(0,1)$, $T_{ij} \sim \bar{F}_{ij}$ distributed from the exponential distribution with mean $1/\theta_{ij}$. Let $\underline{\alpha}=(\alpha_{1,2},\ldots,\alpha_{n-1,n})$ be the vector of dependence structure vector of all joint elements where $\alpha_{ij}=+1$ indicates positive and   $\alpha_{ij}=-1$ states negative dependence structure for elements $i$ and $j$. Set $\underline{X}=(X_1,\ldots,X_n)$ as following:
\begin{equation}\label{mult.model}
\underline{X}=
\begin{cases}
X_1 = \min \{ T_{11} , T_{12}^* (\alpha_{1,2}) , \ldots , T_{1n}^* (\alpha_{1,n}) \} \\
X_2 = \min \{ T_{21}^* (\alpha_{1,2}), T_{22} , \ldots , T_{2n}^* (\alpha_{2,n}) \} \\
\ldots \\
X_n = \min \{ T_{n1}^* (\alpha_{1,n}),T_{n2}^* (\alpha_{2,n}) , \ldots , T_{nn} \} \\
\end{cases}.
\end{equation} 
When $\alpha_{ij}=+1$, set  $T_{ij}^* (\alpha_{i,j})=T_{ji}^* (\alpha_{i,j})=F_{ij}^{-1} (U)$ or $T_{ij}^* (\alpha_{i,j})=T_{ji}^* (\alpha_{i,j})=F_{ij}^{-1} (1-U)$. In the case  $\alpha_{ij}=-1$, put $T_{ij}^* (\alpha_{i,j})=F_{ij}^{-1} (U),T_{ji}^* (\alpha_{i,j})=F_{ij}^{-1} (1-U)$ or $T_{ij}^* (\alpha_{i,j})=F_{ij}^{-1} (1-U),T_{ji}^* (\alpha_{i,j})=F_{ij}^{-1} (U)$.\\
The vector $\underline{X}$ is distributed as the multivariate MO model covering all range of dependence. The survival function of $\underline{X}$  for observations $\underline{x}=(x_1,\ldots,x_n)$ is obtained as following:
\[ \bar{F}_{\underline{X}}(\underline{x}) = \prod_{j=1}^n P \big( T_{jj} > x_j \big) \prod_{i<j} P \big( T_{ij}^* (\alpha_{i,j}) > x_i , T_{ji}^* (\alpha_{i,j}) >x_j \big). \]
Clearly, for every $i<j$, we have
\begin{equation*}\label{mult.model}
P \big( T_{ij}^* (\alpha_{i,j}) > x_i , T_{ji}^* (\alpha_{i,j}) >x_j \big)=
\begin{cases}
\exp \{ - \theta_{ij} \max(x_i , x_j) \}; & \quad  \alpha_{ij}=+1 \\
e^{- \theta_{ij} x_i}  + e^{- \theta_{ij} x_j} -1; & \quad  \alpha_{ij}=-1
\end{cases}.
\end{equation*} 
with $e^{- \theta_{ij} x_i}  + e^{- \theta_{ij} x_j} \geq1$ for every $i<j$. So, 
\begin{equation*}
 \bar{F}_{\underline{X}}(\underline{x}) = \exp \Big\{ - \sum_{j=1}^n \theta_{jj} x_j \Big\}  \prod_{i<j} P \big( T_{ij}^* (\alpha_{i,j}) > x_i , T_{ji}^* (\alpha_{i,j}) >x_j \big). 
\end{equation*}
Note that, when $\alpha_{ij}=+1, \forall i<j$, we get the well-known multivariate MO model given in \citet{marshall1967multivariate}. For the case where every $\alpha_{ij}=-1, \forall i<j$, we get the new multivariate MO model with negative dependence structure (denoted by MNMO) as given:
 \begin{equation}\label{mult.model}
 \bar{F}_{\underline{X}}(\underline{x}) = \exp \Big\{ - \sum_{j=1}^n \theta_{jj} x_j \Big\}  \prod_{i<j} \Big( e^{- \theta_{ij} x_i}  + e^{- \theta_{ij} x_j} -1 \Big),
 \end{equation}
 where $e^{- \theta_{ij} x_i}  + e^{- \theta_{ij} x_j} \geq 1,\forall i<j $.
 \begin{example}
 	Consider the trivariate case ($n=3$) with $\alpha_{ij}=-1,\forall i<j$. Then, $\underline{X}$ can be obtained by
 	\begin{equation*}\label{examplemult}
 	\underline{X}=
 	\begin{cases}
 	X_1 = \min \{ T_{11} , F_{12}^{-1}(U), F_{13}^{-1}(U) \} \\
 	X_2 = \min \{ T_{22} , F_{12}^{-1}(1-U), F_{23}^{-1}(U) \} \\
 	X_3 = \min \{ T_{33} , F_{13}^{-1}(1-U), F_{23}^{-1}(1-U) \} \\
 	\end{cases}.
 	\end{equation*} 
 	The random vector given in (\ref{examplemult}) can be considered as a system with three components $(X_1,X_2,X_3)$ which are subject to joint shocks ($X_i$ and $X_j$ for $i<j$) where the shocks are likely to be distributed unequally within the pairs. The survival function of $\underline{X}$ in (\ref{examplemult}) is  
 	\begin{eqnarray}
 	\bar{F}_{\underline{X}}(x_1,x_2,x_3)&=& e^{-\theta_{11} x_1  -\theta_{22} x_1  -\theta_{33} x_3} \big( e^{-\theta_{12} x_1 } + e^{-\theta_{12} x_2} -1 \big) \nonumber \\
 	&& \times \big( e^{-\theta_{13} x_1 } + e^{-\theta_{13} x_3} -1 \big) \big( e^{-\theta_{23} x_2 } + e^{-\theta_{23} x_3} -1 \big) ,\nonumber   
 	\end{eqnarray}
 	where $e^{- \theta_{ij} x_i}  + e^{- \theta_{ij} x_j} \geq 1,\forall i<j $.
 \end{example}
 
The copula function gives us the raw dependence structure of a random vector that is independent from the marginal distributions. Based on the well-known Sklar theorem (see \citet{nelsen2007introduction}) for every random vector and copula function $C$, we have 
\[F(x_1 , \ldots , x_n) = C\Big(F_1(x_1) , \ldots , F_n(x_n)\Big),\] 
and consequently  for the survival copula $\hat{C}$ we get 
\[\bar{F}(x_1 , \ldots , x_n) = \hat{C}\Big(\bar{F}_1(x_1) , \ldots , \bar{F}_n(x_n)\Big).\] 
The survival copula associated with the model in (\ref{mult.model}) is 
\begin{equation}\label{www}
\hat{C}(u_1 , \ldots , u_n)=\prod_{j=1}^n u_j^{\gamma_{jj}} \prod_{i<j}  (u_i^{\gamma_{ij,i}} + u_j^{\gamma_{ij,j}} -1),
\end{equation} 
where $u_i^{\gamma_{ij,i}} + u_j^{\gamma_{ij,j}} \geq 1$, $\gamma_{jj}=\frac{\theta_{jj}}{ \theta_{j1}+\ldots+\theta_{jn} }$, $\gamma_{ij,i}=\frac{\theta_{ij}}{ \theta_{i1}+\ldots + \theta_{in} }$ and $\gamma_{ij,j}=\frac{\theta_{ij}}{ \theta_{j1}+\ldots + \theta_{jn} }$. For the case $n=2$, the copula in (\ref{www}) gives a special case of the model in \citet{khoudraji1995contribution} and \citet{dolati2014some} found some properties. 
\par Based on \citet{ghosh1981multivariate}, the random vector $(X_1,\ldots,X_n)$ is said to be right tail decreasing in sequence (RTDS), if for all real values $x_i$, $i=1,\ldots,n$ 
\begin{equation*}
P(X_i>x_i | X_1>x_1,\ldots,X_{i-1}>x_{i-1}),
\end{equation*}
is decreasing in $x_1,\ldots,x_{i-1}$. The concept RTDS establishes the negative dependence structure. On noting that a bivariate function $f$ is RR2, if for every $x_1<x_2$ and $y_1<y_2$, it holds that
\[ f(x_1,y_1) f(x_2,y_2) - f(x_1,y_2) f(x_2,y_1) \leq 0, \]
which is equivalent to $\frac{\partial^2 \ln(f(x,y))}{\partial x \partial y} \geq 0$. The following statement shows that the proposed model in (\ref{mult.model}) has negative dependence structure.
\begin{proposition}
Let $\underline{X}$ be distributed from the MNMO model in (\ref{mult.model}). Then, $\underline{X}$ is RTDS.
\end{proposition}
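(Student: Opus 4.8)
The plan is to verify the RTDS property straight from its definition by showing that, for each $i$, the conditional survival probability
\[
P(X_i>x_i \mid X_1>x_1,\ldots,X_{i-1}>x_{i-1})=\frac{G_i(x_1,\ldots,x_i)}{G_{i-1}(x_1,\ldots,x_{i-1})}
\]
is decreasing in every one of $x_1,\ldots,x_{i-1}$, where $G_k(x_1,\ldots,x_k):=P(X_1>x_1,\ldots,X_k>x_k)$ is the survival function of the first $k$ coordinates. Reducing the problem to this single ratio is the conceptual core; everything else is an explicit computation with the product form in (\ref{mult.model}).

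First I would obtain a closed form for $G_k$ by marginalizing (\ref{mult.model}), i.e.\ by letting $x_{k+1},\ldots,x_n\to 0$ in $\bar{F}_{\underline{X}}$ (legitimate since the $X_j$ are nonnegative). Under this substitution each pair factor $e^{-\theta_{ij}x_i}+e^{-\theta_{ij}x_j}-1$ with both indices $\le k$ is left intact, each factor with $i\le k<j$ collapses to $e^{-\theta_{ij}x_i}$, and each factor with both indices exceeding $k$ becomes $1$, giving
\[
G_k=\exp\Big\{-\sum_{m=1}^k\theta_{mm}x_m-\sum_{m=1}^k x_m\sum_{j=k+1}^n\theta_{mj}\Big\}\prod_{a<b\le k}\big(e^{-\theta_{ab}x_a}+e^{-\theta_{ab}x_b}-1\big).
\]
Next I would form the ratio $G_i/G_{i-1}$ and telescope it: every factor not involving the index $i$ cancels, the only surviving pair factors are those with $b=i$, and the leftover exponential rates recombine into a single prefactor, yielding the product representation
\[
\frac{G_i}{G_{i-1}}=\exp\Big\{-\big(\theta_{ii}+\sum_{j>i}\theta_{ij}\big)x_i\Big\}\prod_{a=1}^{i-1}\Big(e^{\theta_{ai}x_a}\big(e^{-\theta_{ai}x_a}+e^{-\theta_{ai}x_i}-1\big)\Big).
\]

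The monotonicity then follows by inspection. The prefactor does not depend on $x_1,\ldots,x_{i-1}$, so it suffices to control each bracketed factor separately. Rewriting
\[
e^{\theta_{ai}x_a}\big(e^{-\theta_{ai}x_a}+e^{-\theta_{ai}x_i}-1\big)=1-e^{\theta_{ai}x_a}\big(1-e^{-\theta_{ai}x_i}\big),
\]
and using $x_i\ge 0$ so that $1-e^{-\theta_{ai}x_i}\ge 0$, each factor is a positive (on the interior region $e^{-\theta_{ai}x_a}+e^{-\theta_{ai}x_i}-1>0$) and decreasing function of $x_a$ alone. Hence the whole ratio is decreasing in each of $x_1,\ldots,x_{i-1}$, which is precisely RTDS.

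I expect the main obstacle to be purely the combinatorial bookkeeping in the two middle steps: tracking exactly which pair factors survive the marginalization that defines $G_k$, and checking that the residual exponential rates in $G_i/G_{i-1}$ recombine into the stated prefactor $\theta_{ii}+\sum_{j>i}\theta_{ij}$. Once the product representation is secured, the sign condition $x_i\ge 0$ (valid since the components are nonnegative) does all the remaining work, and I would finish by extending from the interior to the singular boundary $e^{-\theta_{ai}x_a}+e^{-\theta_{ai}x_i}-1=0$ by continuity.
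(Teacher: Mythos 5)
Your proof is correct, but it takes a genuinely different route from the paper's. The paper never touches the definition of RTDS directly: it invokes the criterion of \citet{ghosh1981multivariate} that $(X_1,\ldots,X_n)$ is RTDS whenever the joint survival function $\bar F$ is RR2 in each pair of arguments with the remaining ones held fixed, and then verifies this in one line --- after taking logarithms of the product form in (\ref{mult.model}), only the single factor $e^{-\theta_{ij}x_i}+e^{-\theta_{ij}x_j}-1$ depends on both $x_i$ and $x_j$, so
\[
\frac{\partial^2}{\partial x_i \, \partial x_j}\ln \bar F(\underline{x})=\frac{-\theta_{ij}^2 \, e^{-\theta_{ij}(x_i+x_j)}}{\bigl(1-e^{-\theta_{ij}x_i}-e^{-\theta_{ij}x_j}\bigr)^2}\leq 0.
\]
You instead verify the definition head-on, and your bookkeeping checks out: in the marginalization defining $G_k$, pair factors inside $\{1,\ldots,k\}$ survive intact, mixed pairs collapse to $e^{-\theta_{ab}x_a}$, and pairs beyond $k$ become $1$; the telescoped ratio $G_i/G_{i-1}$ with prefactor rate $\theta_{ii}+\sum_{j>i}\theta_{ij}$ is right (the leftover terms $+\sum_{a<i}\theta_{ai}x_a$ in the exponent supply exactly your $e^{\theta_{ai}x_a}$ factors); and monotonicity in each $x_a$ is immediate from the rewriting $1-e^{\theta_{ai}x_a}\bigl(1-e^{-\theta_{ai}x_i}\bigr)$, since only the $a$-th (nonnegative) factor involves $x_a$. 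What your route buys: it is self-contained (no appeal to the RR2-implies-RTDS lemma), it produces explicit lower-dimensional marginal and conditional survival functions that are of independent interest, and it handles the singular boundary without differentiating --- your continuity extension is fine, since past the boundary the conditional probability is zero and monotonicity persists. What the paper's route buys: brevity, plus the pairwise RR2 property itself, which is dependence information beyond RTDS. Incidentally, the paper's claim that RR2 is equivalent to $\partial^2\ln f/\partial x\,\partial y\geq 0$ has the inequality reversed (a typo, as its own computation showing $\leq 0$ confirms); your derivative-free argument never risks that slip.
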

\begin{proof}
	Regarding to \citet{ghosh1981multivariate}, we know that the vector $(X_1,\ldots,X_n)$ is RTDS if the corresponding multivariate survival function $\bar{F}$ is RR2 in each pair of elements for fixed values of the remaining arguments. Since for every pair $(X_i,X_j)$, we have 
	\[ \frac{\partial^2 \ln(\bar{F}(x_i,x_j))}{\partial x_i \partial x_j} = \frac{-\theta_{12}^2e^{-\theta_{12}x_i-\theta_{12}x_j }}{(1-e^{-\theta_{12}x_i}-e^{-\theta_{12}x_j})^2} \leq 0. \]
	So, we conclude that  $\bar{F}$ is RR2 in each pair of elements $(X_i,X_j)$ and hence $(X_1,\ldots,X_n)$ is RTDS.
\end{proof}
\section{Conclusion}
In real applications, a system of components are often  exposed to different shocks. The amount of shocks are effective on the reliability of the system. Based on the well-known MO bivariate shock model in (\ref{MO.construction}), it is impossible to allocate the probability of the common shock ($T_{12}$) on each of components ($X_1$ and $X_2$). We have solved this issue by proposing a new MO shock model given in (\ref{eeer}) for bivariate and (\ref{mult.model}) for multivariate cases. The MO model in (\ref{MO.construction}) is a special case of the given model. Also, the obtained model has desirable properties such as covering positive and negative dependence structure and having closed form of stress-strength index making it useful in applications. There are not many bivariate exponential distributions with negative dependence structure and so this model is quite appealing with this regard. Having a singular component makes the new model challenging for estimating its parameters. We have given an estimation method and applied a performance analysis on the proposed estimator to see its effectiveness. The new model is used on the real data given \citet{mohsin2014new} (which is also a bivariate exponential distribution with negative structure) and we showed that our model is more promising than their model. Finally, we have proposed the multivariate case of the given model, followed by some of its properties.
\section{Appendix}
Let $\tilde{\theta}=(\theta_1,\theta_2,\theta_{12})$ and for all $j$:
$$\Delta_j =\theta_2 (\theta_1 + \theta_{12})e^{- \theta_{12} r_j}  +\theta_1 (\theta_2 + \theta_{12})e^{- \theta_{12} s_j} - \theta_1 \theta_2. $$ 
The Normal equations for estimating parameters are as following:
\begin{eqnarray}
\frac{\partial l(\tilde{\theta})}{\partial \theta_1}&=&-\sum_{j=1}^{m_1} r_j + \sum_{j=1}^{m_1} \frac{1}{\Delta_j} \Big( \theta_2 e^{- \theta_{12} r_j} +(\theta_2 + \theta_{12})e^{- \theta_{12} s_j} - \theta_2 \Big) - \frac{m_2}{ \theta_{1} + \theta_{12} }, \nonumber \\
\frac{\partial l(\tilde{\theta})}{\partial \theta_2}&=&-\sum_{j=1}^{m_1} s_j + \sum_{j=1}^{m_1} \frac{1}{\Delta_j} \Big( (\theta_1 + \theta_{12}) e^{- \theta_{12} r_j} +\theta_1e^{- \theta_{12} s_j} + \theta_1\Big) \nonumber \\
&& + \frac{1}{\theta_{12}} \sum_{j=m_1 +1}^m \log \Big( 1-\exp \{ - \theta_{12} r_j \} \Big), \nonumber \\
\text{and} \nonumber \\
 \nonumber \\
\frac{\partial l(\tilde{\theta})}{ \partial \theta_{12} }&=& 
\sum_{j=1}^{m_1}  \frac{-1}{\Delta_j} \Big(  \theta_2 ( \theta_1+ \theta_{12})  e^{- \theta_{12} r_j} r_j + \theta_1 ( \theta_2+ \theta_{12})  e^{- \theta_{12} s_j} s_j   \Big)  \nonumber \\
&& +\frac{m_2}{\theta_{12}} -\frac{m_2}{\theta_1+\theta_{12}} - \frac{\theta_2}{\theta_{12}^2}  \sum_{j=m_1 +1}^m \log \Big( 1-\exp \{ - \theta_{12} r_j \} \Big)  \nonumber \\
&& + \frac{\theta_2}{\theta_{12}} \sum_{j=m_1 +1}^m \frac{r_j e^{- \theta_{12} r_j}}{1-\exp \{ - \theta_{12} r_j \}}. \nonumber
\end{eqnarray}







\section*{References}
\bibliographystyle{apa}
\bibliography{refs}

\begin{thebibliography}{}

\bibitem[\protect\astroncite{Al-Mutairi et~al.}{2018}]{al2018weighted}
Al-Mutairi, D., Ghitany, M., and Kundu, D. (2018).
\newblock Weighted weibull distribution: Bivariate and multivariate cases.
\newblock {\em Brazilian Journal of Probability and Statistics}, 32(1):20--43.

\bibitem[\protect\astroncite{Balakrishnan}{2018}]{balakrishnan2018exponential}
Balakrishnan, K. (2018).
\newblock {\em Exponential distribution: Theory, methods and applications}.
\newblock Routledge.

\bibitem[\protect\astroncite{Basu and Sun}{1997}]{basu1997multivariate}
Basu, A.~P. and Sun, K. (1997).
\newblock Multivariate exponential distributions with constant failure rates.
\newblock {\em Journal of multivariate analysis}, 61(2):159--170.

\bibitem[\protect\astroncite{Bayramoglu and
  Ozkut}{2014}]{bayramoglu2014reliability}
Bayramoglu, I. and Ozkut, M. (2014).
\newblock The reliability of coherent systems subjected to marshall--olkin type
  shocks.
\newblock {\em IEEE Transactions on Reliability}, 64(1):435--443.

\bibitem[\protect\astroncite{Cha and Bad{\'\i}a}{2017}]{cha2017multivariate}
Cha, J.~H. and Bad{\'\i}a, F. (2017).
\newblock Multivariate reliability modelling based on dependent dynamic shock
  models.
\newblock {\em Applied Mathematical Modelling}, 51:199--216.

\bibitem[\protect\astroncite{Cherubini et~al.}{2015}]{cherubini2015marshall}
Cherubini, U., Durante, F., and Mulinacci, S. (2015).
\newblock Marshall-olkin distributions-advances in theory and applications.
\newblock {\em Springer Proceedings in Mathematics \& Statistics, Springer
  International Publishing}.

\bibitem[\protect\astroncite{Cherubini and
  Mulinacci}{2017}]{cherubini2017gumbel}
Cherubini, U. and Mulinacci, S. (2017).
\newblock The gumbel-marshall-olkin distribution.
\newblock In {\em Copulas and Dependence Models with Applications}, pages
  21--31. Springer.

\bibitem[\protect\astroncite{Cui and Li}{2007}]{cui2007analytical}
Cui, L. and Li, H. (2007).
\newblock Analytical method for reliability and mttf assessment of coherent
  systems with dependent components.
\newblock {\em Reliability Engineering \& System Safety}, 92(3):300--307.

\bibitem[\protect\astroncite{Dolati et~al.}{2014}]{dolati2014some}
Dolati, A., Mohseni, S., and {\'U}beda-Flores, M. (2014).
\newblock Some results on a transformation of copulas and quasi-copulas.
\newblock {\em Information Sciences}, 257:176--182.

\bibitem[\protect\astroncite{Elouerkhaoui}{2017}]{elouerkhaoui2017credit}
Elouerkhaoui, Y. (2017).
\newblock {\em Credit correlation: Theory and practice}.
\newblock Springer.

\bibitem[\protect\astroncite{Esary and Marshall}{1974}]{esary1974multivariate}
Esary, J.~D. and Marshall, A.~W. (1974).
\newblock Multivariate distributions with exponential minimums.
\newblock {\em The Annals of Statistics}, pages 84--98.

\bibitem[\protect\astroncite{Fan et~al.}{2009}]{fan2009multivariate}
Fan, J., Nunn, M.~E., and Su, X. (2009).
\newblock Multivariate exponential survival trees and their application to
  tooth prognosis.
\newblock {\em Computational Statistics \& Data Analysis}, 53(4):1110--1121.

\bibitem[\protect\astroncite{Genest et~al.}{2018}]{genest2018new}
Genest, C., Mesfioui, M., and Schulz, J. (2018).
\newblock A new bivariate poisson common shock model covering all possible
  degrees of dependence.
\newblock {\em Statistics \& Probability Letters}, 140:202--209.

\bibitem[\protect\astroncite{Ghosh and Ebrahimi}{1981}]{ghosh1981multivariate}
Ghosh, M. and Ebrahimi, N. (1981).
\newblock Multivariate negative dependence.
\newblock {\em Communications in Statistics-Theory and Methods},
  10(4):307--337.

\bibitem[\protect\astroncite{G{\'o}mez et~al.}{1998}]{gomez1998multivariate}
G{\'o}mez, E., Gomez-Viilegas, M., and Marin, J. (1998).
\newblock A multivariate generalization of the power exponential family of
  distributions.
\newblock {\em Communications in Statistics-Theory and Methods},
  27(3):589--600.

\bibitem[\protect\astroncite{Joe}{1997}]{joebook1997}
Joe, H. (1997).
\newblock {\em Multivariate models and multivariate dependence concepts}.
\newblock CRC Press.

\bibitem[\protect\astroncite{Khoudraji}{1995}]{khoudraji1995contribution}
Khoudraji, A. (1995).
\newblock {\em Contribution l'{\'e}tude des copules et la mod'elisation de
  valeurs extremes multivari{\'e}es}.
\newblock PhD thesis, PhD Thesis, Universit{\'e} de Laval, Qu{\'e}bec.

\bibitem[\protect\astroncite{Kundu et~al.}{2014}]{kundu2014multivariate}
Kundu, D., Franco, M., and Vivo, J.-M. (2014).
\newblock Multivariate distributions with proportional reversed hazard
  marginals.
\newblock {\em Computational Statistics \& Data Analysis}, 77:98--112.

\bibitem[\protect\astroncite{Kundu and Gupta}{2013}]{kundu2013bayes}
Kundu, D. and Gupta, A.~K. (2013).
\newblock Bayes estimation for the marshall--olkin bivariate weibull
  distribution.
\newblock {\em Computational Statistics \& Data Analysis}, 57(1):271--281.

\bibitem[\protect\astroncite{Kundu and Gupta}{2009}]{kundu2009bivariate}
Kundu, D. and Gupta, R.~D. (2009).
\newblock Bivariate generalized exponential distribution.
\newblock {\em Journal of Multivariate Analysis}, 100(4):581--593.

\bibitem[\protect\astroncite{Lange et~al.}{1993}]{lange1993influence}
Lange, T.~R., Royals, H., and Connor, L.~L. (1993).
\newblock Influence of water chemistry on mercury concentration in largemouth
  bass from florida lakes.
\newblock {\em Transactions of the American Fisheries Society}, 122(1):74--84.

\bibitem[\protect\astroncite{Li and Pellerey}{2011}]{li2011generalized}
Li, X. and Pellerey, F. (2011).
\newblock Generalized marshall--olkin distributions and related bivariate aging
  properties.
\newblock {\em Journal of Multivariate Analysis}, 102(10):1399--1409.

\bibitem[\protect\astroncite{Lin et~al.}{1993}]{lin1993multivariant}
Lin, H.-H., Chen, K., and Wang, R.-T. (1993).
\newblock A multivariant exponential shared-load model.
\newblock {\em IEEE Transactions on Reliability}, 42(1):165--171.

\bibitem[\protect\astroncite{Lindskog and McNeil}{2003}]{lindskog2003common}
Lindskog, F. and McNeil, A.~J. (2003).
\newblock Common poisson shock models: Applications to insurance and credit
  risk modelling.
\newblock {\em ASTIN Bulletin: The Journal of the IAA}, 33(2):209--238.

\bibitem[\protect\astroncite{Marshall and
  Olkin}{1967}]{marshall1967multivariate}
Marshall, A.~W. and Olkin, I. (1967).
\newblock A multivariate exponential distribution.
\newblock {\em Journal of the American Statistical Association},
  62(317):30--44.

\bibitem[\protect\astroncite{Mohsin et~al.}{2014}]{mohsin2014new}
Mohsin, M., Kazianka, H., Pilz, J., and Gebhardt, A. (2014).
\newblock A new bivariate exponential distribution for modeling moderately
  negative dependence.
\newblock {\em Statistical Methods \& Applications}, 23(1):123--148.

\bibitem[\protect\astroncite{Mohtashami-Borzadaran
  et~al.}{2020}]{mohtashami-borzadaran_jabbari_amini_2020}
Mohtashami-Borzadaran, H., Jabbari, H., and Amini, M. (2020).
\newblock Bivariate marshall--olkin exponential shock model.
\newblock {\em Probability in the Engineering and Informational Sciences},
  pages 1--21.

\bibitem[\protect\astroncite{Nelsen}{2007}]{nelsen2007introduction}
Nelsen, R.~B. (2007).
\newblock {\em An introduction to copulas}.
\newblock Springer Science \& Business Media.

\bibitem[\protect\astroncite{Raftery}{1984}]{raftery1984continuous}
Raftery, A.~E. (1984).
\newblock A continuous multivariate exponential distribution.
\newblock {\em Communications in Statistics-Theory and methods},
  13(8):947--965.

\bibitem[\protect\astroncite{Shih and Emura}{2016}]{shih2016bivariate}
Shih, J.-H. and Emura, T. (2016).
\newblock Bivariate dependence measures and bivariate competing risks models
  under the generalized fgm copula.
\newblock {\em Statistical Papers}, pages 1--18.

\bibitem[\protect\astroncite{Tawn}{1990}]{tawn1990modelling}
Tawn, J.~A. (1990).
\newblock Modelling multivariate extreme value distributions.
\newblock {\em Biometrika}, 77(2):245--253.

\end{thebibliography}
\end{document}